\def\cocoon{0}
\DeclareMathOperator{\lcs}{\textsf{LCS}}
\DeclareMathOperator{\lis}{\textsf{LIS}}
\DeclareMathOperator{\disj}{\textsf{DISJ}}
\DeclareMathOperator{\disjnk}{\textsf{DISJ}_{n,k}}
\newcommand{\Cone}{C^{(1)}}
\newcommand{\Ctwo}{C^{(2)}}
\newcommand{\path}{\mathcal{P}}
\newtheorem{lemma}{Lemma}[section]
\newtheorem{theorem}{Theorem}
\newtheorem{corollary}{Corollary}[section]
\newtheorem{definition}{Definition}[section]
\newtheorem{conjecture}{Conjecture}[section]
\newtheorem{claim}{Claim}[section]
\newtheorem{remark}[theorem]{Remark}
\newcommand{\zo}{\ensuremath{{\{0,1\}}}\xspace}
\newcommand{\eps}{\ensuremath{\varepsilon}}
\def\*#1{\mathbf{#1}}
\def\+#1{\mathcal{#1}}
\renewcommand{\algorithmiccomment}[1]{\bgroup\hfill $\triangleright$~#1\egroup}
\newcounter{ALC@tempcntr}
\def\final{0} 
\newcommand{\authnote}[3]{\textcolor{#2}{{\sf (#1's Note: {\sl{#3}})}}}
\newcommand{\note}[1]{}
\newcommand{\ynote}{\authnote{Yu}{green}}
\newcommand{\tx}{\tilde{x}}
\begin{document}
\title{Streaming and Query Once Space Complexity of Longest Increasing
Subsequence}
\author{
 \and Xin Li\thanks{Department of Computer Science,  Johns Hopkins University, Email: lixints@cs.jhu.edu. Supported by NSF CAREER Award CCF-1845349 and NSF Award CCF-2127575.}
 \and Yu Zheng\thanks{Meta Platforms Inc., Email: hizzy1027@gmail.com  Partially supported by NSF CAREER Award CCF-1845349. Most of the work was done while the author was a graduate student at Johns Hopkins University.}}
\date{}

\maketitle
\begin{abstract}
Longest Increasing Subsequence ($\lis$) is a fundamental problem in combinatorics and computer science. Previously, there have been numerous works on both upper bounds and lower bounds of the time complexity of computing and approximating $\lis$, yet only a few on the equally important space complexity.

In this paper, we further study the space complexity of computing and approximating $\lis$ in various models. Specifically, we prove non-trivial space lower bounds in the following two models: (1) the adaptive query-once model or read-once branching programs, and (2) the streaming model where the order of streaming is different from the natural order. 

As far as we know, there are no previous works on the space complexity of $\lis$ in these models. Besides the bounds, our work also leaves many intriguing open problems.   

\ifnum\cocoon=1
\keywords{longest increasing subsequence  \and streaming algorithm \and approximation algorithm \and branching program.}
\fi

\end{abstract}

\section{Introduction}
Longest Increasing Subsequence ($\lis$) is a natural measure of a sequence where the alphabet has a total order, and finding (the length of) $\lis$ in a sequence is a fundamental problem in both combinatorics and computer science, which has been studied for decades. For example, the well known Erd\"{o}s–Szekeres theorem in combinatorics states that, given any natural numbers $r$ and $s$, any sequence of at least $(r-1)(s-1)+1$ objects with a total order contains a monotonically increasing subsequence of length $r$ or a monotonically decreasing subsequence of length $s$. Besides being interesting in its own right, $\lis$ is also closely related to other important string problems. For example, it is a special case of the problem of finding the longest common subsequence ($\lcs$) between two strings, where one string is arranged in the increasing order. As such, algorithms for $\lis$ are often used as subroutines for $\lcs$, which in turn has wide applications in bio-informatics due to its connections to gene sequences.

In terms of computing $\lis$, the classical Patience Sorting algorithm \cite{Hammersley72, doi:10.1137/1004036} can find an $\lis$ of a sequence of length $n$ over an alphabet $\Sigma$ in time $O(n \log n)$ and space $O(n \log n)$, while the work of \cite{DBLP:journals/mst/KiyomiOOST20} generalizes this by providing a trade-off between the time and space used. Specifically, for any $s \in \N$ with $\sqrt{n} \leq s \leq n$, \cite{DBLP:journals/mst/KiyomiOOST20} gives an algorithm that uses $O(s \log n)$ space and $O(\frac{n^2}{s} \log n)$ time for computing $\lis$-length, and $O(\frac{n^2}{s} \log^2 n)$ time for finding an actual subsequence. However, no algorithm is known to achieve a better trade-off. We remark that the the decision version of $\lis$ is in the class $\mathsf{NL}$ (non-deterministic logspace), and thus by Savitch's theorem \cite{savitch1970relationships}, it can be solved in time $n^{O(\log n)}$ and space $O(\log^2 n)$. 

Several works studied the problem of \emph{approximating} $\lis$, with the goal of achieving either better time complexity or better space complexity. For time complexity, one aims to obtain a good approximation of $\lis$ using \emph{sublinear time}. The known results in this category depend heavily on the length of $\lis$. For example, the work of \cite{doi:10.1137/130942152} provides an $(1+\eps)$ approximation of $\lis$-length in truly
sublinear time if the length is at least $(1-\lambda)n$ for $\lambda = \Omega(\log \log n/ \log n)$. When the length of $\lis$ is at least $\lambda n$ for an arbitrary $\lambda < 1$, \cite{DBLP:conf/focs/RubinsteinSSS19} gave an algorithm that provides an $O(1/\lambda^3)$ approximation in $\Tilde{O}(\sqrt{n}/\lambda^7)$ time. A subsequent work of \cite{DBLP:conf/soda/MitzenmacherS21} improved the approximation to $O(1/\lambda^{\eps})$ and the time complexity to $O(n^{1-\Omega(\eps)}(\log n/\lambda)^{O(1/\eps)})$. The work of \cite{newman2021new} introduced an $O(1/\lambda)$ approximation algorithm with non-adaptive query complexity $\Tilde{O}(\sqrt{r}\text{ poly}(1/\lambda))$ assuming there are $r$ distinct values in the sequence. Finally, a recent work of \cite{9996890} achieves a $1/\lambda^{o(1)}$ approximation in $O(n^{o(1)}/\lambda)$ time for any $\lambda=o(1)$. We note that all these algorithms are randomized algorithms.

The situation becomes better for space complexity.\ Here, the goal is to obtain a good approximation of $\lis$ using \emph{sublinear space}, while still maintaining polynomial time. In this context, the work of \cite{saha17} provides an algorithm for $\lis$-length that achieves an $\eps n$ additive approximation using space $O(\frac{\log n}{\eps})$, while a recent work \cite{cheng_et_al:LIPIcs.ICALP.2021.54} achieves a $1+\eps$ approximation using $\tilde{O}_{\eps,\delta}(n^\delta)$ space and $\tilde{O}_{\eps, \delta}(n^{2-2\delta})$ time, for any constants $\delta\in (0,\frac{1}{2})$ such that $\frac{1}{\delta}$ is an integer, and $\eps\in (0,1)$. \cite{cheng_et_al:LIPIcs.ICALP.2021.54} further provides an algorithm that achieves a $1+O(\frac{1}{\log \log n})$ approximation using $O(\frac{\log^4 n}{\log \log n})$ space and $n^{5+o(1)}$ time. 

In addition, due to applications on large data sets, $\lis$ is also well studied in the \emph{streaming model}, where the sequence is accessed from a data stream with one or a small number of passes, rather than random access. In this model, the work of \cite{DBLP:conf/soda/GopalanJKK07} provides a one-pass streaming algorithm that achieves a $1+\eps$ approximation of $\lis$-length, using $O(n \log n)$ time and $O(\sqrt{n/\eps}\log n)$ space, where the space complexity is known to be tight due to the lower bounds given in \cite{gal2010lower, ergun2008distance}. Interestingly, all these algorithms are deterministic, and it is an intriguing question to see if randomized algorithms can achieve better space complexity.

In this work, we further study the lower bounds of space complexity for computing and approximating $\lis$ in various models. Recall that the decision version of $\lis$ is in the class $\mathsf{NL}$, and the question of whether $\mathsf{NL}=\mathsf{L}$ ($\mathsf{L}$ stands for deterministic logspace) is still a major open problem in theoretical computer science. Hence, to get any non-trivial space lower bound, it is natural and necessary to restrict the models. Here, we study two different models: the query-once model and the streaming model.

\paragraph{Query-once model.} In this model, we allow the algorithm to have random access to the input sequence, but impose the restriction that the algorithm can only query each element of the sequence at most once, using any adaptive strategy. It is thus a natural and strict generalization of the one-pass streaming model. 

In fact, we study the slightly more general model of \emph{read-once branching program}, which can be used to represent the deterministic query-once algorithm. Informally, a read-once branching program models the query-once model as a directed graph, where at each node of the branching program, the program queries one position of the input sequence. Depending on the queried input symbol, the program jumps to another node and continues the process. The read-once property ensures that in any computation path, any input element is queried at most once. The size of the branching program is defined as the number of nodes, which roughly corresponds to $2^{O(s)}$ for a space $s$ computation. A formal description is given in \cref{sec:prelim}. We note that the model of read-once branching program is a non-uniform model, hence is more general than the uniform query-once algorithm model.

Size lower bounds of read-once branching programs for explicit functions have also been the subject of extensive study. Following a long line of research \cite{Wegener88, Zak84, DBLP:conf/fct/Dunne85, DBLP:journals/tcs/Jukna88, DBLP:journals/tcs/KrauseMW91, Simon1992ANL, doi:10.1137/S0097539795290349, DBLP:journals/ipl/Gal97, DBLP:journals/ipl/BolligW98, AndreevBCR99, DBLP:journals/tcs/Kabanets03, Li23}, the current best lower bound for a function in $\mathsf{P}$ is $2^{n-O(\log n)}$ \cite{Li23}, which is optimal up to the constant in $O(\cdot)$. There are even functions in uniform-$\mathsf{AC}^0$ that give strong size lower bounds for read-once branching programs \cite{DBLP:journals/tcs/Jukna88, DBLP:journals/tcs/KrauseMW91, DBLP:journals/ipl/Gal97, DBLP:journals/ipl/BolligW98, LiZ23}, where the current best lower bound is $2^{(1-\delta)n)}$ for any constant $\delta>0$ \cite{LiZ23}.

However, many of the above lower bounds are achieved by somewhat contrived functions, thus it is also important and interesting to study the size of read-once branching programs computing natural functions. Notable examples include the integer multiplication function \cite{doi:10.1137/S0097539795290349, DBLP:conf/stoc/BolligW01, ABLAYEV200378}, where a lower bound of $\Omega(2^{n/4})$ \cite{DBLP:conf/stoc/BolligW01} is known for deterministic read-once branching programs and a lower bound of $2^{\Omega(n/\log n)}$ \cite{ABLAYEV200378} is known for randomized read-once branching programs; and the clique-only function \cite{Zak84, BorodinRS93}, where a lower bound of $2^{\Omega(\sqrt{n})}$ is known even for non-deterministic read-once branching programs.

Following this direction, in this paper we study the size of deterministic read-once branching programs for $\lis$-length.

\paragraph{Streaming model with arbitrary order.} In this model, the input sequence is again given to the algorithm in a data stream. However, unlike the standard streaming model for $\lis$ where the sequence is given from the first element to the last element, here we study the streaming model where the elements are given in some arbitrary order, according to a permutation of $[n]$. Indeed, in practice the data stream containing the input sequence may not be exactly in the natural order of the elements. For example, consider the situation of an asynchronous network, where a client sends a long sequence to a server for processing. Even if the client sends the sequence from the first element to the last element, the elements received by the server may be in a different order due to transmission delays in the network. Hence, this model is also a natural and practical generalization of the standard streaming model for $\lis$. In this paper we study two types of streaming orders, which we define later. The motivation of these orders comes from the fact that a random order falls into either type with high probability. Therefore, these orders actually capture most of the streaming orders.

We study $\lis$ space lower bounds for both deterministic and randomized algorithms in these models. Previously, the only known non-trivial space lower bounds for approximating $\lis$ are in the streaming model with standard order, where the aforementiond works \cite{gal2010lower, ergun2008distance} give a lower bound of $\Omega\left (\frac{1}{R}\sqrt{\frac{n}{\eps}}\log \left ( \frac{|\Sigma|}{\eps n}\right ) \right )$ for any $R$ pass deterministic streaming algorithm achieving a $1+\eps$ approximation of $\lis$-length when the alphabet size $|\Sigma| \geq n$, and the subsequent work of \cite{li_et_al:LIPIcs.FSTTCS.2021.27} which extends this bound to $\Omega(\min(\sqrt{n}, |\Sigma|)/R)$ for any constant $\eps>0$ with any alphabet size. In contrast, there is no known non-trivial space lower bound for randomized algorithms that $1+\eps$ approximate $\lis$-length, for any constant $\eps>0$ in the streaming model.

For exact computation of the $\lis$-length, \cite{sun2007communication} establishes a space lower bound of $\Omega(n)$ for any $O(1)$-pass randomized streaming algorithm, as long as $|\Sigma| \geq n$, again in the standard order. 

In summary, we stress that the only known lower bounds for either approximating or exact computation of the $\lis$-length in the streaming model are in the \emph{standard order}, as far as we know. Furthermore, there is no known space lower bound (even for deterministic computation of $\lis$-length) in the query-once model.
\subsection{Our results}
In this paper we establish new space lower bounds for computing and approximating $\lis$-length in the query-once model and the streaming model. We start by stating our results in the query-once model. Below, for a sequence $x \in \Sigma^n$, we use $\lis(x)$ to stand for the length of a longest increasing subsequence in $x$. In this paper we always assume that the alphabet size $|\Sigma| > n$.

\paragraph{Query-once model}


\begin{restatable}{theorem}{ROBPThm}
\label{ROBPThm}
Given input sequences $x\in \Sigma^n$, any read-once branching program that computes $\lis(x)$ has size $2^{\Omega(n)}$. 
\end{restatable}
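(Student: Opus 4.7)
The approach exploits the combinatorial-rectangle structure of read-once branching programs. Assume (by a standard padding that blows up the size by at most an $O(n)$ factor, so preserves any $2^{\Omega(n)}$ lower bound) that $P$ is \emph{complete}, i.e.\ every computation path queries all $n$ positions. Then each node $v$ has a well-defined query set $Q_v \subseteq [n]$ independent of the particular path reaching $v$, together with its complement $A_v = [n]\setminus Q_v$; the input space $\Sigma^n$ decomposes, according to the node reached at layer $n/2$, into at most $S$ combinatorial rectangles of the form $T_v \times \Sigma^{A_v}$ with $T_v \subseteq \Sigma^{Q_v}$. On each rectangle, $\lis$ must be independent of the $Q_v$-coordinate: for every $z_1, z_2 \in T_v$ and every $w \in \Sigma^{A_v}$, $\lis(z_1 \sqcup w) = \lis(z_2 \sqcup w)$, because the BP's future computation from $v$ is determined solely by $(v, w)$.

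The crux is to construct a \emph{distinguishing family} $\mathcal{F}\subseteq\Sigma^n$ of size $2^{\Omega(n)}$: for every distinct pair $x, y \in \mathcal{F}$ and every $Q \subseteq [n]$ with $|Q| = n/2$, there is some $w \in \Sigma^{[n]\setminus Q}$ with $\lis(x|_Q \sqcup w) \neq \lis(y|_Q \sqcup w)$. Given such an $\mathcal{F}$, the lower bound is immediate by pigeonhole: if $S < |\mathcal{F}|$ then two distinct $x, y \in \mathcal{F}$ land in the same rectangle $T_v \times \Sigma^{A_v}$ at layer $n/2$, but the $z$-independence of $\lis$ on that rectangle contradicts distinguishability at $Q = Q_v$.

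To construct $\mathcal{F}$, I plan to use a probabilistic argument: sample sequences from a rich distribution (for instance, uniformly random injections from $[n]$ into a range of size $\poly(n)$, which is possible since $|\Sigma| > n$) and show that two independent samples fail to be distinguishable at a fixed $Q$ only with probability at most $2^{-\Omega(n\log n)}$. A union bound over the $\binom{n}{n/2} \leq 2^n$ cuts and $O(|\mathcal{F}|^2)$ pairs then yields a family of size $2^{\Omega(n)}$ that is distinguishable at every cut simultaneously.

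The main obstacle is establishing the sharp non-distinguishability bound at a fixed $Q$. The natural tool is the large number of $\lis$-equivalence classes on $\Sigma^Q$, where two prefixes are equivalent iff they give the same $\lis$ for every completion over $A = [n]\setminus Q$. For the contiguous cut $Q = \{1, \ldots, n/2\}$ these classes correspond to Patience-Sorting pile-bottom configurations, of which there are already at least $|\Sigma|^{\Omega(n)}$; the technical heart of the argument is to extend this count \emph{uniformly} to an arbitrary $Q$, so that two random prefixes collapse into the same class only with doubly-small probability, which is what the union bound above requires.
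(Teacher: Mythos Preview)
Your high-level framework --- pass to a leveled BP, look at the middle layer, and show that a large family $\mathcal{F}$ must be shattered there --- is exactly the paper's. The differences are in how the two technical ingredients are obtained.

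First, a minor point: the ``standard padding'' to make the BP complete is not actually standard for read-once programs, since different paths reaching the same node $v$ may a priori have queried different variable sets, and splitting $v$ accordingly could blow up the size exponentially. The paper instead proves that \emph{for a correct $\lis$-program} the query set at each node is well-defined: if two paths to $v$ had queried different sets, some input position would go unqueried on the combined path, and flipping it changes $\lis$ without changing the output. So your assumption is justified, but as a consequence of correctness rather than generic padding.

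Second, and this is the real gap: you correctly identify the construction of $\mathcal{F}$ as the ``technical heart,'' but your proposed attack --- lower-bound the number of $\lis$-equivalence classes on $\Sigma^Q$ uniformly over all cuts $Q$ --- is both hard to execute for non-contiguous $Q$ and does not by itself give the collision bound you need (a class count bounds collision probability from below, not above; you would also need to control the maximum class size). The paper avoids this entirely with one clean structural idea: take $\mathcal{F}$ to consist of \emph{increasing} sequences. For those, a short direct construction (Claim~\ref{clm:query-once1}) shows that whenever two increasing sequences differ on $S$, one can build a completion $z$ on $[n]\setminus S$ that forces different $\lis$ values --- no equivalence-class analysis, and the argument is uniform in $S$. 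The probabilistic step (Claim~\ref{clm:query-once2}) then only has to produce $2^{\Omega(n)}$ increasing sequences that pairwise \emph{literally differ} on every $S$ of size $n/5$, which reduces to a concrete estimate on $\Pr[x|_S=y|_S]$ for two uniform increasing sequences over an alphabet of size $\Theta(n)$ (handled via the log-sum inequality). Note the paper uses $|S|=n/5$, not $n/2$, precisely so that this probability beats the $\binom{n}{n/5}$ union bound with only a linear-size alphabet.
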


\begin{remark}
Since our alphabet size $|\Sigma| > n$, our input size is actually $n'=O(n \log n)$ and hence in terms of $n'$, the lower bound is $2^{\Omega(n'/\log n')}$. Also, in our model the read-once branching program reads a symbol in $\Sigma$ each time, instead of just one bit. It is an interesting open question to see if one can get better lower bounds or in the model where the read-once branching program reads an input bit each time.
\end{remark}

This gives the following corollaries.

\begin{restatable}{corollary}{QueryOnceThm}
\label{QueryOnceThm}
Given input sequences $x\in \Sigma^n$, any deterministic algorithm that computes $\lis(x)$ and queries each symbol of $x$ at most once (the queries can be adaptive) needs to use $\Omega(n)$ space. 
\end{restatable}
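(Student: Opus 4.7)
The plan is to obtain this corollary as an immediate consequence of \cref{ROBPThm} via the standard conversion between space-bounded deterministic query-once algorithms and read-once branching programs. Given any deterministic algorithm $A$ that queries each symbol of $x$ at most once and uses $s$ bits of working memory, I would simulate $A$ by a read-once branching program whose nodes correspond to the possible configurations of $A$ immediately before each query.

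A configuration is determined by the contents of the $s$-bit work tape together with the index in $[n]$ of the position to be queried next. By determinism, the sequence of queries made so far is already encoded into these data, so the total number of configurations is at most $n \cdot 2^s$. At each non-sink node I would place the index of the input position that $A$ would query, and the outgoing edges would be labeled by the symbols in $\Sigma$ that might be read, each edge leading to the node corresponding to the next configuration of $A$; sinks would be labeled by $A$'s output. The query-once hypothesis on $A$ is precisely what guarantees that on every root-to-sink path each input variable is read at most once, so the resulting branching program is indeed read-once and computes $\lis(x)$.

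Applying \cref{ROBPThm} to the simulating program gives $n \cdot 2^s \geq 2^{\Omega(n)}$, and solving yields $s \geq \Omega(n) - O(\log n) = \Omega(n)$, as required. There is no serious technical obstacle: the only care needed is to fix the configuration encoding carefully so that the query-once property of $A$ translates faithfully into the read-once property of the branching program, and to absorb the additive $\log n$ overhead (from encoding the next query index) into the final $\Omega(n)$ bound.
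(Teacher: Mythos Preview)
Your proposal is correct and matches the paper's approach: the paper treats \cref{QueryOnceThm} as an immediate consequence of \cref{ROBPThm} via exactly the standard conversion you describe, having set this up in the preliminaries by defining the space of a branching program as the base-2 logarithm of its size. The paper does not spell out the configuration-counting argument separately, so your write-up is slightly more explicit but not different in substance.
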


\begin{restatable}{corollary}{StreamExtThm}
\label{StreamExtThm}
Given input sequences $x\in \Sigma^n$, any one-pass deterministic streaming algorithm computing $\lis(x)$ needs to use $\Omega(n)$ space, regardless of the order of the elements in the stream. 
\end{restatable}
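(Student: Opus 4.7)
The plan is to derive Corollary \ref{StreamExtThm} as an immediate consequence of Theorem \ref{ROBPThm} via the standard simulation of space-bounded algorithms by branching programs. The key observation is that a one-pass streaming algorithm, regardless of the order in which it receives the stream, reads each position of the input exactly once in a fixed (non-adaptive) sequence, so it is a special case of the deterministic query-once model to which the read-once branching program lower bound directly applies.

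Concretely, fix an arbitrary permutation $\pi \in S_n$ describing the stream order and let $A$ be a one-pass deterministic streaming algorithm that computes $\lis(x)$ using $s$ bits of space. I would build an associated read-once branching program $B_A$ whose nodes are pairs $(i, q)$ with $i \in \{0, 1, \ldots, n\}$ and $q \in \{0,1\}^s$, totalling $O(n \cdot 2^s) = 2^{O(s)}$ nodes. The node $(i, q)$ is labeled with the query position $\pi(i+1)$, and for each symbol $\sigma \in \Sigma$ it has an out-edge labeled $\sigma$ leading to $(i+1, q')$, where $q'$ is the state reached by $A$ after reading $\sigma$ from state $q$. Since every root-to-sink path of $B_A$ queries each position $\pi(j)$ exactly once, $B_A$ is a valid read-once branching program and by construction it outputs $\lis(x)$ on every $x \in \Sigma^n$.

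Applying Theorem \ref{ROBPThm} yields $|B_A| \geq 2^{\Omega(n)}$, and therefore $s = \Omega(n)$. Since the argument treats $\pi$ as arbitrary, the bound holds uniformly for every stream order, establishing Corollary \ref{StreamExtThm}. There is no real obstacle in this step; the only point worth noting is that the streaming-to-BP simulation preserves the read-once property for every $\pi$, which is immediate from the fact that a one-pass algorithm reads each position exactly once along every execution path.
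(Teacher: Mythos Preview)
Your proposal is correct and takes essentially the same approach as the paper: the paper simply observes in one line that the streaming model is a restricted version of the query-once model, so the space lower bound carries over. You spell out the standard simulation by an explicit read-once branching program, which is more detailed than the paper's proof but amounts to the same argument.
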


It is clear that these lower bounds are almost tight, up to a $\log |\Sigma| $ factor.

\begin{remark}
We note that \cite{sun2007communication} establishes a space lower bound of $\Omega(n)$ for any $O(1)$-pass randomized streaming algorithm that computes $\lis(x)$, as long as $|\Sigma| \geq n$.\ However, we stress that their result does NOT supersede ours, since their result only applies to the standard streaming order, where the input sequence is read from left to right. On the other hand, our Corollary~\ref{StreamExtThm} applies to any arbitrary streaming order. Therefore, these two results are incomparable. Furthermore, our Theorem~\ref{ROBPThm} and Corollary~\ref{QueryOnceThm} give lower bounds in the read-once branching program model and query-once model, which are strictly stronger than the streaming model.
\end{remark}

\paragraph{Streaming model in special orders}

Given an alphabet $\Sigma$ and an input sequence $x= x_1x_2 \cdots x_n\in \Sigma^n$, we represent the order of streaming as a permutation $\pi: [n] \to [n]$ and write $\pi = \pi_1\pi_2 \cdots \pi_n$, where each $\pi_i =\pi(i) \in [n]$. The streaming algorithm has access to $x$ in the order of $\pi$, i.e., it sees $x_{\pi_1}$, then $x_{\pi_2}$ and so on. In other words, the index $i$ refers to the $i$-th symbol in the stream, while the index $\pi_i$ refers to the $\pi_i$-th symbol in the original input sequence $x$, so the $i$-th symbol in the stream corresponds to the $\pi_i$-th symbol in the original input sequence $x$.

We prove space lower bounds for two types of orders. 

\begin{restatable}[Type 1 order]{definition}{TypeOneOrder}
 $\pi$ is a type 1 order with parameter $m$ if there are two sets of indices $I = \{i_1, i_2, \dots, i_m\}$ and $J=\{j_1, j_2, \dots, j_m\}$, with $|I|=|J|=m$ such that $\max(I) < \min(J)$ and $1 \le \pi_{i_1} < \pi_{j_1}< \pi_{i_2} <\pi_{j_2} < \cdots <  \pi_{i_m} <\pi_{j_m} < n $.
\end{restatable}

Notice that $\max(I) < \min(J)$ guarantees that for any $i \in I $ and $j\in J$, $x_{\pi_i}$ appears before $x_{\pi_j}$ in order $\pi$. The constraint $1 \le \pi_{i_1} < \pi_{j_1}< \pi_{i_2} <\pi_{j_2} < \cdots <  \pi_{i_m} <\pi_{j_m} < n $ says that the original indices of the symbols corresponding to $I$ and $J$ in $x$ are interleaved. 

For example, any order that first reveals all symbols in odd positions and then all symbols in even positions is an order of Type 1 with parameter $n/2$ since we can let $I = \{1,2, \dots, n/2\}$ and $J=\{n/2 +1, n/2+2 , \dots, n\}$. 

We have the following lower bounds for deterministic approximation algorithms in Type 1 order.

\begin{restatable}{theorem}{TypeOneOrderThm}
\label{TypeOneOrderThm}
Given input sequences $x\in \Sigma^n$ in any \textbf{type 1} order with parameter $m$,  any $R$-pass deterministic algorithm that achieves a $1 + 1/32$ approximation of $\lis(x)$ needs to use $\Omega(m/R)$ space. 
\end{restatable}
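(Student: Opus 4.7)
My plan is to reduce from a gap version of set disjointness to the $\lis$-approximation problem. Fix a small constant $\alpha$ with $\alpha > 2/33$ (for concreteness, $\alpha = 1/16$) and consider the promise problem $\mathrm{GapDisj}_m^\alpha$ in which Alice holds $a \in \{0,1\}^m$, Bob holds $b \in \{0,1\}^m$, and they must distinguish the \textsc{no} instances $a \cap b = \emptyset$ from the \textsc{yes} instances $|a \cap b| \geq \alpha m$. I will establish a deterministic communication lower bound of $\Omega(m)$ for this problem by exhibiting a fooling set $F = \{(c, \bar c) : c \in \mathcal{C}\}$ of size $2^{\Omega(m)}$, where $\bar c$ denotes the bitwise complement and $\mathcal{C} \subseteq \{0,1\}^m$ is a constant-weight code of weight $m/2$ and relative minimum distance at least $2\alpha$. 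Each pair is a \textsc{no} instance since $c \cap \bar c = \emptyset$, and for any distinct $c, c' \in \mathcal{C}$ the cross pair satisfies $|c \cap \overline{c'}| = |c \setminus c'| \geq \alpha m$ and is therefore a \textsc{yes} instance. The existence of such a code with $|\mathcal{C}| = 2^{\Omega(m)}$ for the chosen $\alpha$ follows from a standard probabilistic argument over random weight-$m/2$ vectors combined with Chernoff-type concentration of their pairwise intersections.

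Next I would embed a $\mathrm{GapDisj}_m^\alpha$ instance into an $\lis$ instance tailored to the given type 1 order $\pi$. With $I, J$ and the interleaving $\pi_{i_1} < \pi_{j_1} < \cdots < \pi_{i_m} < \pi_{j_m}$ from the definition, I set
\[
x_{\pi_{i_k}} = \begin{cases} 4k-1 & \text{if } a_k = 1, \\ 4k-3 & \text{if } a_k = 0, \end{cases}
\qquad
x_{\pi_{j_k}} = \begin{cases} 4k-2 & \text{if } b_k = 1, \\ 4k & \text{if } b_k = 0, \end{cases}
\]
for each $k \in [m]$, and $x_p = 0$ for every remaining position $p$. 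Within each ``gadget'' at positions $\pi_{i_k} < \pi_{j_k}$, the ordered pair of assigned values is strictly decreasing exactly when $a_k = b_k = 1$ (contributing $1$ to the $\lis$) and strictly increasing otherwise (contributing $2$); across gadgets, every value in gadget $k$ is strictly smaller than every value in gadget $k+1$, so any optimal strictly increasing subsequence decomposes additively and contributes exactly $2m - |a \cap b|$ from the meaningful positions. Since $0$ is strictly less than every gadget value, at most one non-meaningful zero can be included in a strictly increasing subsequence, which yields $\lis(x) = 2m - |a \cap b| + \Delta$ for some $\Delta \in \{0, 1\}$.

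For the streaming-to-communication simulation, the defining property $\max(I) < \min(J)$ of type 1 order ensures that for every stream position $t \leq \max(I)$, the original index $\pi_t$ is either in $\{\pi_{i_k}\}_k$ (known to Alice from $a$) or non-meaningful (a constant independent of $(a,b)$); hence Alice can fully simulate $\mathcal{A}$ on stream positions $1, \ldots, \max(I)$ using only $a$ and $\pi$, and Bob symmetrically handles positions $\max(I)+1, \ldots, n$ using $b$ and $\pi$. Alice sends $\mathcal{A}$'s current memory (at most $s$ bits) to Bob at the end of her portion of each pass, and Bob sends the updated state back to Alice between passes, yielding $O(Rs)$ bits of communication over $R$ passes.

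A $(1+1/32)$-approximation of $\lis(x)$ separates the two cases: in \textsc{no} the output is at least $\tfrac{2m}{1+1/32} = \tfrac{64m}{33}$, while in \textsc{yes} it is at most $(2-\alpha)m + 1$, and $\tfrac{64m}{33} > (2-\alpha)m + 1$ is equivalent to $(33\alpha - 2)m > 33$, which holds whenever $\alpha > 2/33$ and $m$ exceeds an absolute constant. Combining with the $\Omega(m)$ communication lower bound yields $Rs = \Omega(m)$, i.e., $s = \Omega(m/R)$. The main technical care-points I anticipate are (i) verifying the existence of a constant-weight code with the required rate and relative distance for the fooling-set argument, and (ii) carefully bookkeeping the additive constant $\Delta$ from the non-meaningful positions so that it is cleanly dominated in the $(1+1/32)$-gap arithmetic above; the gadget construction is the heart of the proof but both care-points are what determine whether the advertised approximation threshold $1 + 1/32$ comes out on the right side.
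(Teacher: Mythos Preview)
Your proposal is correct and follows the same high-level strategy as the paper: a fooling-set communication lower bound built from a good binary code, combined with a per-coordinate gadget that turns the code distance into a constant-factor gap in $\lis$. The details differ in two instructive ways. First, your gadget uses only the two positions $\pi_{i_k},\pi_{j_k}$ (values in $\{4k-3,\ldots,4k\}$) whereas the paper uses four positions per coordinate; your version is more economical but requires the slightly more delicate fooling set $\{(c,\bar c):c\in\mathcal C\}$ with a \emph{constant-weight} code, so that $|c\cap\overline{c'}|=|c\setminus c'|=\tfrac12 d(c,c')$ is large for every cross pair. The paper instead takes the diagonal fooling set $\{(u,u):u\in C\}$ over an arbitrary good binary code and uses the asymmetry of its $4$-symbol gadget (swapping $u,v$ turns a ``case~2'' block into a ``case~3'' block) to force one of the two cross inputs to have small $\lis$. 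Second, you factor the argument through $\mathrm{GapDisj}_m^\alpha$, which is a clean modular presentation; the paper defines the gap function directly on the $\lis$ instance.

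One point to tighten: with $\alpha=1/16$ the multiplicative gap $\frac{2m}{(2-\alpha)m+1}\approx\frac{32}{31}$ only barely exceeds $1+\tfrac1{32}$, so your separation inequality holds for one-sided approximation but would fail for a two-sided $(1+\tfrac1{32})$ guarantee. Since the probabilistic argument you sketch easily yields constant-weight codes of relative distance close to $\tfrac12$, you can safely take $\alpha$ larger (say $\alpha=1/8$) and get comfortable slack in the gap arithmetic; this resolves the care-point you flagged.
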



We believe that type 1 order is interesting, since one can show that, with high probability, a random streaming order is a type 1 order with parameter $\Omega(n)$. Hence type 1 order actually captures most of the streaming orders. In turn, this gives the following corollary. 

\begin{restatable}{corollary}{RandomStrmOrder}
\label{RandomStrmOrder}
Given input sequences $x\in \Sigma^n$ in a random order sampled uniformly from all permutations on $[n]$, with probability $1-2^{-\Omega(n)}$, any $R$-pass deterministic algorithm that achieves a $1 + 1/32$ approximation of $\lis(x)$ needs to use $\Omega(n/R)$ space.
\end{restatable}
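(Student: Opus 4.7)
The plan is to reduce \cref{RandomStrmOrder} to \cref{TypeOneOrderThm} by showing that a uniformly random permutation $\pi$ on $[n]$ is, with probability $1 - 2^{-\Omega(n)}$, a type $1$ order whose parameter $m$ is $\Omega(n)$. Once this structural statement is in place, plugging $m = \Omega(n)$ into \cref{TypeOneOrderThm} yields the claimed $\Omega(n/R)$ space lower bound immediately.

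To exhibit the type $1$ structure, I would fix the stream-split at $n/2$ and take $I_\star = [1, n/2]$, $J_\star = [n/2+1, n]$, so $\max(I_\star) < \min(J_\star)$ automatically. Let $A = \pi(I_\star) \subseteq [n]$ denote the original positions revealed in the first half of the stream (so $A$ is a uniformly random $n/2$-subset of $[n]$) and let $B = [n] \setminus A$. For $k = 1, \dots, n/2-1$, let $E_k$ be the event that $2k-1 \in A$ and $2k \in B$, and set $N = \sum_k \mathbf{1}[E_k]$. Whenever $N \ge m$, taking the first $m$ indices $k_1 < \cdots < k_m$ at which $E_{k_\ell}$ holds and setting $i_\ell = \pi^{-1}(2k_\ell - 1) \in I_\star$, $j_\ell = \pi^{-1}(2k_\ell) \in J_\star$ yields a valid type $1$ witness, because by construction $\pi_{i_\ell} = 2k_\ell - 1$ and $\pi_{j_\ell} = 2k_\ell$ strictly interleave as $k_\ell$ increases, and all these values are strictly below $n$.

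What remains is the tail bound $N \ge n/16$ with probability $1 - 2^{-\Omega(n)}$, and this is the step that needs a little care because the indicators $\mathbf{1}[E_k]$ are not independent once one fixes $|A|=n/2$. I would sidestep this by passing to the Bernoulli model in which each element of $[n]$ is independently placed into $A$ with probability $1/2$; there the events $\{E_k\}$ depend on disjoint pairs of positions and are therefore mutually independent with $\Pr[E_k] = 1/4$, so a standard Chernoff bound gives $\Pr_{\text{Bern}}[N < n/16] \le 2^{-\Omega(n)}$. The uniform $n/2$-subset distribution is obtained from the Bernoulli model by conditioning on $|A| = n/2$, an event of Bernoulli probability $\Theta(1/\sqrt{n})$, and this conditioning inflates the failure probability by only a $\mathrm{poly}(n)$ factor, still leaving it at $2^{-\Omega(n)}$. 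I expect the Bernoulli-to-uniform conditioning to be the only real subtlety; everything else is bookkeeping.
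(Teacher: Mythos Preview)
Your proposal is correct and follows the same overall reduction scheme as the paper (show that a uniformly random $\pi$ is type~1 with linear parameter with probability $1-2^{-\Omega(n)}$, then invoke \cref{TypeOneOrderThm}), but the probabilistic argument you use to establish the structural claim is genuinely different.

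The paper also lets $A=\pi([n/2])$ be the set of original positions revealed in the first half of the stream, but instead of looking at split pairs it works with $g(A)$, the number of maximal intervals comprising $A$, observing that $g(A)\ge m+1$ already yields a type~1 witness with parameter $m$ (take the left endpoints of the intervals for $I$ and the integers just past the right endpoints for $J$). It then bounds the number of size-$n/2$ subsets with $g(A)\le n/32$ by an explicit combinatorial count, namely $\sum_{k\le n/32}\binom{n/2-1}{k-1}\binom{n/2}{k}$, and compares this to $\binom{n}{n/2}$ to get the $2^{-\Omega(n)}$ failure probability. Your approach instead tracks only the events $E_k=\{2k-1\in A,\ 2k\notin A\}$, which depend on disjoint coordinate pairs, and handles the dependence coming from the fixed-size constraint by passing to the i.i.d.\ Bernoulli model, applying Chernoff, and paying the $\Theta(\sqrt{n})$ conditioning factor. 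Your route is shorter and uses only off-the-shelf concentration; the paper's route is self-contained combinatorics and yields the explicit parameter $n/32$ without any appeal to Chernoff or a model change. Note that your count $N$ is always at most the paper's $g(A)$ (each split pair contributes a right endpoint of some interval), so you are proving a slightly weaker structural fact, but it is still enough for the $\Omega(n)$ parameter you need.
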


For randomized algorithms, we show a simple lower bound for exact computation.

\begin{restatable}{theorem}{TypeOneOrderThmRand}
Given input sequences $x\in \Sigma^n$ in any \textbf{type 1} order with parameter $m$, any $R$-pass randomized algorithm that computes $\lis(x)$ correctly with probability at least $2/3$ needs to use $\Omega(m/R)$ space. 
\end{restatable}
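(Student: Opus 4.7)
The plan is to reduce from two-party set disjointness on universe $[m]$, whose two-sided-error randomized communication complexity is $\Omega(m)$ independent of the number of rounds (by the classical results of Kalyanasundaram--Schnitger and Razborov). Given Alice's input $S\subseteq[m]$ and Bob's input $T\subseteq[m]$, they construct (without communication) an input $x\in\Sigma^n$ presented in the given Type~1 order with parameter $m$ such that $\lis(x)$ directly reveals $|S\cap T|$. Using the guaranteed witnesses $I=\{i_1,\dots,i_m\}$ and $J=\{j_1,\dots,j_m\}$ with $\max(I)<\min(J)$ and $\pi_{i_1}<\pi_{j_1}<\cdots<\pi_{i_m}<\pi_{j_m}<n$, the construction is: Alice sets $x_{\pi_{i_k}}=2k-1$ if $k\in S$ and $x_{\pi_{i_k}}=2k$ otherwise; Bob sets $x_{\pi_{j_k}}=2k$ if $k\in T$ and $x_{\pi_{j_k}}=2k-1$ otherwise; every remaining position of $x$ (in particular position $n$, which is free since $\pi_{j_m}<n$ forces $n\geq 2m+1$) is filled with a publicly known sentinel value $2m+1$. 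The alphabet accommodates these values because $|\Sigma|>n\geq 2m+1$.

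Next I would prove the key formula $\lis(x)=m+|S\cap T|+1$. Within the $k$-th block $\{\pi_{i_k},\pi_{j_k}\}$ both values lie in $\{2k-1,2k\}$, and a short case check shows the pair is strictly increasing in original order if and only if $k\in S\cap T$; in the other three cases the two values are either equal or decreasing. Since every value in block $k$ is strictly less than every value in block $k{+}1$, a strictly increasing subsequence supported on main positions picks at most two elements per block and picks two only for $k\in S\cap T$, giving the upper bound $m+|S\cap T|$; the matching lower bound is witnessed by taking $x_{\pi_{i_k}}$ for every $k$ and additionally $x_{\pi_{j_k}}$ for every $k\in S\cap T$. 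The sentinel $2m+1$ exceeds every main value and is identical at every sentinel position, so at most one sentinel can appear in a strictly increasing subsequence; exactly one (at position $n$) can always be appended after the best main subsequence. Hence $\lis(x)=m+|S\cap T|+1$, and exact computation of $\lis(x)$ decides disjointness.

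The space lower bound then follows from a standard streaming-to-communication simulation. Given any $R$-pass randomized algorithm $\+A$ with space $s$ and error probability at most $1/3$, in each pass Alice simulates $\+A$ on stream positions $1,\dots,\max(I)$ (she knows her own main values together with the fixed sentinels at any padding positions in that range, since $\max(I)<\min(J)$ guarantees no $J$-positions appear there), sends the $s$-bit memory state to Bob, who continues the simulation on positions $\max(I)+1,\dots,n$ and, after every pass except the last, returns the updated state; Bob then outputs $\lis(x)$ and deduces whether $S\cap T=\emptyset$. The protocol uses $O(sR)$ bits and inherits the success probability $\geq 2/3$, so the $\Omega(m)$ randomized disjointness lower bound forces $sR=\Omega(m)$ and hence $s=\Omega(m/R)$. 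The only non-routine step is verifying the LIS formula (in particular ruling out cross-block ``shortcuts'' that might exceed $m+|S\cap T|+1$); the reduction and the simulation are otherwise completely standard.
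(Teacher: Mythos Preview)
Your proposal is correct and follows essentially the same approach as the paper: both reduce two-party set disjointness to exact $\lis$ by encoding each universe element as an Alice/Bob pair of adjacent positions so that the pair is strictly increasing iff the element lies in the intersection, and then apply the standard $R$-pass streaming-to-communication simulation. The only cosmetic differences are that the paper uses $k$-subset disjointness with the H{\aa}stad--Wigderson bound and zeros out absent elements (prepending a dummy $0$), whereas you use unrestricted disjointness with the Kalyanasundaram--Schnitger/Razborov bound and fill all non-main positions with a common large sentinel; your encoding additionally recovers $|S\cap T|$ rather than just whether it is zero, but this extra information is harmless.
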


The second type of orders generalizes type 1 orders, and corresponds to orders with interleaving blocks. 

\begin{restatable}[Type 2 order]{definition}{TypeTwoOrder}
$\pi$ is a type 2 order with parameters $r$ and $s$ if the following holds.  There are $r$ disjoint sets of indices $B_1, B_2 , \dots, B_r$ each of size $s$ such that $$\max\big( \bigcup_{l \text{ is odd}} B_l \big)< \min\big( \bigcup_{l \text{ is even}} B_l \big), $$ 
and for any $1\le l < r$, we have $\max_{i\in B_l}(\pi_i) < \min_{i\in B_{l+1}}(\pi_i)$. 

\end{restatable}

For example, if we divide $[n]$ evenly into $\sqrt{n}$ blocks each of size $n$, then any order that first reveals symbols in the odd blocks and then symbols in the even blocks is a Type 2 order with $r = s = \sqrt{n}$. This is because we can pick $r=s=\sqrt{n}$ such that $B_1, B_2, \dots, B_{r/2}$ are the odd blocks and $B_{r/2+1}, B_{r/2+2}, \dots, B_{r}$ are the even blocks. 

\paragraph{Remark} Type 1 order with parameter $m$ can be viewed as a special case of type 2 order as it is essentially type 2 order with parameter $r = 2m$ and $s = 1$.

We have the following lower bound for deterministic approximation algorithms in type 2 order. 

\begin{restatable}{theorem}{TypeTwoOrderThm}
\label{thm:lis_type2}
Given input sequences $x\in \Sigma^n$ in any \textbf{type 2} order with parameters $r$ and $s$, any $R$-pass deterministic algorithm that gives a $1+1/400$ approximation of $\lis(x)$ needs to use $\Omega(r\cdot s / R)$ space.
\end{restatable}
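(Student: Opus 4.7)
The plan is to prove the lower bound by a two-party communication-complexity argument. For a type 2 order with parameters $r$ and $s$, the stream cleanly splits into the odd-indexed blocks $\bigcup_{l\text{ odd}}B_l$ and the even-indexed blocks $\bigcup_{l\text{ even}}B_l$: every stream position in an odd-indexed block precedes every stream position in an even-indexed one. Letting Alice own the first half of the stream and Bob the second, any $R$-pass deterministic streaming algorithm of space $S$ can be simulated by an interactive protocol in which the two parties exchange the $S$-bit memory state $O(R)$ times across the odd/even cut, for total communication $O(RS)$. So to obtain $S=\Omega(rs/R)$ it suffices to reduce to $(1+1/400)$-approximating $\lis$ from a communication problem with an $\Omega(rs)$ deterministic lower bound that holds for any number of rounds.

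The hard problem I would use is Gap Set Disjointness on $N=\Theta(rs)$ bits: distinguish $|u\cap v|=0$ from $|u\cap v|\ge \alpha N$ for a small absolute constant $\alpha>0$; this has $\Omega(N)$ deterministic communication complexity regardless of the number of rounds. Given an instance $(u,v)\in\zo^{N}\times\zo^{N}$, I parse each of $u$ and $v$ as an $(r/2)\times s$ matrix and assign the $l$-th row of $u$ to Alice's block $B_{2l-1}$ and the $l$-th row of $v$ to Bob's block $B_{2l}$. I partition the alphabet into $r$ pairwise disjoint ordered intervals $R_1<R_2<\cdots<R_r$, write only symbols from $R_l$ in block $B_l$, and install a local encoding gadget in each pair $(B_{2l-1},B_{2l})$: the $s$ bits of Alice and $s$ bits of Bob pick specific symbols in $R_{2l-1}$ and $R_{2l}$ respectively, calibrated so that the maximum length of an increasing subsequence through the pair equals $2s$ when the two rows are disjoint and drops by a constant for every coordinate of overlap.

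Because the intervals $R_l$ are disjoint and ordered, every increasing subsequence of $x$ visits the blocks in the natural order $B_1,B_2,\ldots,B_r$, so the per-pair contributions add up and $\lis(x) = rs - \Theta(|u\cap v|)$ up to lower-order terms. Choosing $\alpha$ appropriately small, in the YES case $\lis(x)\ge L$ while in the NO case $\lis(x)\le L/(1+1/400)$ for a common threshold $L$. Therefore any deterministic $(1+1/400)$-approximation decides Gap Disjointness, and combined with the $O(RS)$ simulation we get $S=\Omega(rs/R)$, proving the theorem.

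The principal obstacle is the design of the local gadget: the chain from $B_{2l-1}$ to $B_{2l}$ must gain exactly $2s$ elements when the rows are disjoint and lose one for every coordinate with $u_{l,k}=v_{l,k}=1$, with no way for an adversarial increasing subsequence to compensate by skipping a high-overlap pair or by taking a few high-numbered elements early. The disjoint block-alphabet intervals $R_l$ rule out the inter-block shortcuts; inside each pair, a natural two-track-per-position gadget (with each party's bit selecting which track carries her symbol) should make the per-coordinate accounting tight. The $s=1$ specialization of this plan recovers the type 1 construction of \Cref{TypeOneOrderThm}, and the extension to $s>1$ amounts to running $s$ parallel type 1 reductions per pair while keeping their contributions independent via the gadget layout; verifying that this yields the claimed linear aggregation of per-pair gaps is the main calculation of the proof.
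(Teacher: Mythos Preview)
Your plan has a genuine structural gap at the gadget step, and the paper's proof takes a very different route precisely because that gadget cannot exist.

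With the alphabet partition $R_1<R_2<\cdots<R_r$ and symbols in $B_l$ drawn from $R_l$, the $\lis$ of the whole string is exactly $\sum_l \lis(x|_{B_{2l-1}}\circ x|_{B_{2l}})$. Inside a single pair all of Alice's $s$ positions precede all of Bob's $s$ positions in $x$ (this is what the type~2 condition $\max_{i\in B_{2l-1}}\pi_i<\min_{i\in B_{2l}}\pi_i$ says), so the pair's $\lis$ is $\max_t\big(\text{Alice-LIS with values}\le t\big)+\big(\text{Bob-LIS with values}>t\big)$, a quantity in which Alice's sequence depends only on $u_l$ and Bob's only on $v_l$. Your requirement that the pair's $\lis$ equal $2s$ exactly when $u_l\cap v_l=\emptyset$ forces both local sequences to be fully increasing and forces $a_s(u_l)<b_1(v_l)$ iff $u_l\cap v_l=\emptyset$. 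But this relation is impossible already on three singleton inputs: from $u=e_1,v=e_2$ and $u=e_3,v=e_1$ disjoint and $u=e_3,v=e_3$ overlapping you deduce $a_s(e_3)<b_1(e_1)\le a_s(e_1)<b_1(e_3)\le a_s(e_3)$. So no ``two-track-per-position'' gadget of the kind you describe can realize a per-coordinate drop tied to $u_{l,k}\wedge v_{l,k}$. Your $s=1$ sanity check is also off: in the type~1 construction each length-$4$ gadget interleaves Alice's and Bob's \emph{positions}; in your type~2 plan the positions within a pair are not interleaved, so the type~1 trick does not transfer.

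What the paper does instead is abandon additivity over pairs entirely. It does \emph{not} separate the alphabet by block. It builds a $\{0,1\}$ matrix $M$ (columns grouped into the $r$ blocks, with odd blocks owned by Alice and even by Bob) and assigns values so that $\lis(\sigma(M))$ equals the maximum weight of a monotone lattice path through $M$ from top-left to bottom-right. The fooling set comes from two nested asymptotically good codes $C^{(1)}\subseteq\{0,1\}^{\Theta(s)}$ and $C^{(2)}\subseteq (C^{(1)})^{\Theta(r)}$: when Alice's and Bob's $C^{(2)}$-codewords agree, every path has a fixed maximum weight; when they differ, the two code distances force $\Omega(r)$ columns each with $\Omega(s)$ ``mismatch'' positions, and an application of the Erd\H{o}s--Szekeres theorem (\cref{lem:erdos-szekeres}) yields a monotone chain of $\Omega\!\left(\frac{rs}{r+s}\right)$ such positions that a single path can collect, creating the constant-factor gap. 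The whole point is that the extra weight is a \emph{global} path phenomenon, not a sum of local per-pair gains; this is exactly what the separated-alphabet setup rules out.
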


Intuitively, the streaming order that is most friendly to computing or approximating $\lis$-length is the natural order (or the reverse order). Indeed, in these models the algorithm given in \cite{DBLP:conf/soda/GopalanJKK07} achieves a one-pass $1+\eps$ approximation of $\lis$-length, using $O(n \log n)$ time and $O(\sqrt{n/\eps}\log n)$ space. We thus conjecture that this is the best one can do, and $1+\eps$ approximation of $\lis$-length in any streaming order requires $\Omega(\sqrt{n})$ space. Specifically, we have the following conjecture, which seems quite natural but we haven't been able to prove.

\begin{conjecture}
Given input sequences $x\in \Sigma^n$, for any one-pass streaming order, any deterministic algorithm that achieves $1+\eps$ approximation of $\lis(x)$ needs to use $\Omega_{\eps}(\sqrt{n})$ space.
\end{conjecture}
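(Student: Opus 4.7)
The natural framework is one-way communication complexity. Fix an arbitrary streaming order $\pi$ and split the stream at its midpoint: let $A=\{\pi_1,\ldots,\pi_{n/2}\}$ be the set of original positions read in the first half and $B=[n]\setminus A$ the remaining ones. Any deterministic $s$-space one-pass algorithm yields an $s$-bit one-way protocol from an Alice holding $x|_A$ to a Bob holding $x|_B$, where both parties know the cut $(A,B)$ induced by $\pi$. Hence it suffices to show that for every balanced cut $(A,B)$ with $|A|=|B|=n/2$, $(1+\eps)$-approximating $\lis(x)$ requires $\Omega_{\eps}(\sqrt n)$ bits of communication. Note that the type 1 and type 2 theorems in the excerpt already cover the case when $(A,B)$ is highly interleaved, so the difficult regime is when $(A,B)$ is close to the standard $\{[n/2],[n/2{+}1,n]\}$ split.

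To attack the remaining regime I would generalize the Gal--Gopalan hard distribution from \cite{gal2010lower}. Partition $[n]$ into $\sqrt n$ zones of $\sqrt n$ consecutive positions each, and assign zone $Z_i$ a disjoint value range $V_i$ of size $\sqrt{n/\eps}$ chosen so that an LIS picks at most one value from each zone and its length decomposes additively as a sum over zones of the ``level'' chosen in that zone. Call a zone \emph{informative} if both $A\cap Z_i$ and $B\cap Z_i$ are non-empty; inside an informative zone one can arrange the values so that determining the achievable level requires combining Alice's and Bob's inputs, forcing $\Omega(1)$ bits of communication per informative zone, or $\Omega(1/\sqrt{\eps})$ bits under the finer $(1+\eps)$-resolution. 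A direct-sum argument over informative zones then gives $\Omega_{\eps}(\sqrt n)$ total communication, provided $\Omega(\sqrt n)$ zones are informative.

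The crux is to show that every balanced cut $(A,B)$ admits a zoning with $\Omega(\sqrt n)$ informative zones. A fixed uniform zoning is insufficient: an adversarial cut could concentrate $A$ inside $\sqrt n/2$ of the zones and $B$ inside the other $\sqrt n/2$, producing no informative zones at all. The remedy I would try is \emph{adaptive zoning}: choose the zone boundaries as a function of the cut, greedily carving $[n]$ into intervals that contain comparable numbers of $A$- and $B$-positions. Any balanced cut can only support a bounded amount of monochromatic mass, so a sliding-window argument should extract $\Omega(\sqrt n)$ intervals of length $\Theta(\sqrt n)$ each, every one of them balanced between $A$ and $B$ and therefore informative. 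For constant $\eps$ this already yields $\Omega(\sqrt n)$; for shrinking $\eps$ one inflates the per-zone value range to $\sqrt{n/\eps}$ as above.

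The main obstacle is checking that the hard-instance LIS structure survives when the zones are of unequal, adversarially dependent lengths. The additive decomposition of $\lis$ over zones requires the value ranges $V_i$ to be monotonically separated and to be populated by sufficiently many elements on each side of the cut to realize every level; with irregular zones one must reprove this uniformly. A secondary issue is converting the distributional direct-sum bound into a worst-case deterministic bound, which should be doable via a standard fooling-set / rectangle covering argument once the distributional step is in place. The interplay between adaptive zoning and the rigidity of the Gal--Gopalan construction seems to be exactly where the proof stalls, and is in our view the reason the statement is posed as a conjecture.
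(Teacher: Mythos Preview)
The statement you are addressing is a \emph{conjecture} in the paper, not a theorem: the authors explicitly write that it ``seems quite natural but we haven't been able to prove'' it, and they leave it open whether there exists \emph{any} streaming order admitting a constant-factor deterministic approximation in $o(\sqrt{n})$ space. There is therefore no proof in the paper to compare your proposal against.

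You correctly recognize this, and your write-up is a proof \emph{strategy} rather than a proof; you even identify the sticking point yourself. Your reduction to one-way communication across the midpoint cut $(A,B)$ is the natural framing, and you are right that the paper's Type~1 and Type~2 results handle highly interleaved cuts (in fact with an $\Omega(n)$ bound, far stronger than $\sqrt{n}$), while the Gal--Gopalan argument handles the contiguous cut $([n/2],[n/2{+}1,n])$. The gap is genuinely the intermediate regime.

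The obstacle you flag is real. The Gal--Gopalan hard instances rely on a rigid product structure: equal-length zones, disjoint monotone value ranges, and a per-zone gadget that forces Alice to commit to a ``level'' that Bob can verify. Once the zoning is chosen adaptively as a function of the cut $(A,B)$, the zones have varying lengths and varying $A$/$B$ proportions, and it is no longer clear that one can populate each zone with a gadget whose $\lis$ contribution (i) depends nontrivially on both sides and (ii) composes additively across zones. In particular, your greedy claim that ``any balanced cut can only support a bounded amount of monochromatic mass'' is false as stated: the cut $A=[n/2]$ is balanced yet entirely monochromatic on each side, and more generally a cut can alternate long monochromatic runs of total length $\Theta(n)$, so the sliding-window argument does not automatically yield $\Omega(\sqrt{n})$ balanced intervals of length $\Theta(\sqrt{n})$. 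One would need a more careful dichotomy between ``many alternations'' (handled by Type~1/Type~2) and ``few alternations'' (handled by a Gal--Gopalan-style argument on the long runs), and making the two regimes meet at $\sqrt{n}$ is exactly where the conjecture is open.
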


In particular, it is not clear if there is any streaming order where one can get a constant factor deterministic approximation algorithm for $\lis$-length that uses $o(\sqrt{n})$ space.

\subsection{Technique overview}

\ifnum\cocoon=1
We now give an overview of the techniques used in our paper. Full proofs are deferred to the appendix.

\paragraph{Query-once model and read-once branching programs.}
\fi

\ifnum\cocoon=0
\subsubsection{Query-once model and read-once branching programs.}
\fi

Our first lower bound is for any read-once branching program. Specifically, we prove that any read-once branching program computing $\lis$ exactly must have a large size, which in turn implies a space lower bound. Assume the alphabet size is $m$ and the input length is $n$, and let $x$ and $y$ be two different input strings. Let $I_j(x)$ denote the set of positions the branching program queries at the $j$-th level on input $x$. Note that $I_j(x)$ may not be equal to $I_j(y)$, due to the adaptivity of the algorithm. 

However, we show that if the computation paths of two different inputs $x$ and $y$ go through the same node at the $j$-th level, then we must have $I_j(x) = I_j(y)$. This is because if not, we can find two new sequences $x'$ and $y'$ such that $x'$ (resp. $y'$) follows the same computation path as $x$ (resp. $y$) until the $j$-th level, and at the same time $x'$ and $y'$ follow the same computation path after the $j$-th level. Since the branching program can query each position of an input at most once, the computation path of $x'$ and $y'$ after the $j$-th level can not query any position in $I_j(x) \cup I_j(y)$. That means the branching program will output the same result for $x'$ and $y'$, and at the same time, there must be at least one position in $x'$ (and $y'$) not queried by the branching program. We can now change the symbol in that unqueried position of $x'$ to get another sequence with a different $\lis$-length, but since this position is not queried the branching program will still give the same output. This is a contradiction. 

With the above observation, we show that there must be a level in the branching program with $2^{\Omega(n)}$ nodes. The proof is based on the following two claims. 

First, for any two different increasing sequences $x, y \in \Sigma^n$ and a subset $S$ of $[n]$ with size $n/5$, if $x$ and $y$ are not equal when restricted to $S$ (i.e. $x|_S\neq y|_S$), then we can find two new sequences $x'$ and $y'$ with unequal $\lis$-length such that for positions in $S$,  $x' = x$ and $y' = y$ (i.e. $x'|_S = x|_S$, $y'|_S = y'|_S$) and for positions not in $S$, $x' = y'$ (i.e. $x'|_{[n] \setminus S} = y'|_{[n] \setminus S}$) (Claim~\ref{clm:query-once1}
\ifnum\cocoon=1
in the appendix\fi). The construction of $x'$ and $y'$ is given in the proof of Claim~\ref{clm:query-once1}. 

Second, there exists a set of $2^{\Omega(n)}$ increasing sequences such that for any $S \subseteq [n]$ with size $n/5$ and any two sequences in the set, they are not equal when restricted to $S$ (Claim~\ref{clm:query-once2} \ifnum\cocoon=1
in the appendix\fi). The proof is based on a probabilistic argument, where we show that by independently randomly choosing $2^{\Omega(n)}$ increasing sequences, there is a non-zero probability that they satisfy the claim.

Now, consider the set in Claim~\ref{clm:query-once2} and the $n/5$-th level of the branching program, we argue that any two sequences in the set cannot go through the same node at that level. This is because if they do, then by Claim~\ref{clm:query-once1}, we can build two new sequences with different $\lis$-length but the branching program will give the same output. Thus, the $n/5$-th level must have $2^{\Omega(n)}$ nodes. This yields our $2^{\Omega(n)}$ size lower bound for read-once branching programs and $\Omega(n)$ space lower bound any adaptive query-once algorithm. 

\ifnum\cocoon=1
\paragraph{Streaming in special orders.}
\fi
\ifnum\cocoon=0
\subsubsection{Streaming in special orders.}
\fi

As many other streaming space lower bounds, our proof is based on reductions from communication complexity problems. Specifically, we consider a 2-party communication problem where Alice and Bob each holds a different part of the input sequence based on the streaming model. In addition, our proof uses error-correcting codes to create gaps that are necessary for our approximation lower bounds. A more detailed description is given below.

\textbf{Type 1 order.} In type 1 order with parameter $m$, there are $2m$ positions and the streaming first reveals the odd positions and then the even positions. To make the presentation easier, let's consider the special case where we first see all the odd positions of the input sequence, and then all the even positions. We can translate this streaming order into a 2-party communication problem, where Alice holds all the odd positions of the input sequence and Bob holds all the even positions. Their goal is to approximate the $\lis$-length. The space required by any streaming algorithm is at least the communication complexity between Alice and Bob.

The communication complexity lower bound is established by constructing a large fooling set. Specifically, our fooling set is obtained from a simple binary asymptotically good error-correcting code $C$ with codeword length $n/4$. The size of the code $C$ (number of codewords) is $2^{\Omega(n)}$ and for any two different codewords $c_1$ and $c_2$, their Hamming distance is $\Omega(n)$. 

Assume Alice holds a codeword $u \in \zo^{n/4}$ and Bob holds a codeword $v \in \zo^{n/4}$. Our proof gives a construction that transforms $u$ and $v$ into a sequence such that the odd positions only depend on $u$ and the even positions only depend on $v$. Denote this new sequence by $z = z(u,v)$. We divide $z$ into $n/4$ small blocks of size 4, such that, for the $i$-th block, if $u_i = v_i$, or $u_i = 0$ and $v_i = 1$, then the $\lis$-length of this block is 1; on the other hand, if $u_i  = 1 $ and $v_i = 0$, then the $\lis$-length of this block is 0. We design the blocks so that the total $\lis$-length of $z$ is the summation of the $\lis$-lengths of all blocks. Thus, when $u = v$, the $\lis$ of $z$ is always a fixed number. But when $u \neq v$, since $u$ and $v$ are both codewords of $C$, there is a constant fraction of positions such that $u_i\neq v_i$. Then the $\lis$-length of one of $z(u,v) $ and $z(v,u)$ is smaller by a constant factor. This makes $C$ a fooling set and gives the $\Omega(n)$ space lower bound. Generalizing this to any parameter $m < n$, we get the lower bound of $\Omega(m)$ for type 1 order with parameter $m$.

We also show that any randomized algorithm computing $\lis$-length exactly in this order must use $\Omega(m)$ space.\ The proof is based on a reduction from the set-disjointness problem to computing $\lis$ exactly.

\textbf{Type 2 order.} In type 2 order, we assume Alice and Bob each holds $r$ interleaved blocks of size $s$. Since type 1 order is a special case of type 2 order with $r=m$ and $s=1$, naturally, we want to extend our previous techniques to type 2 order and construct another fooling set.


Recall that for type 1 order, our fooling set is obtained from an asymptotically good error-correcting code, where Alice and bob each holds a codeword, $u$ and $v$. The input sequence constructed from these codewords is divided into $m$ small blocks, where the $i$-th block is determined by $u_i$ and $v_i$. Whenever $u_i \neq v_i$, we can potentially reduce the $\lis$-length by 1. Since the code $C$ has $\Omega(m)$ distance, this gives a lower bound for constant factor approximation algorithms. 

In the case of type 2 order, each block has size $s$. If we can only create a gap of 1 in $\lis$-length for each pair of blocks depending on whether the corresponding bits of $u$ and $v$ are equal, then we can only get a lower bound for $1+o(1)$ approximation. To amplify the gap, we use another asymptotically good error-correcting code.  

Here we present a simplified version of our construction to illustrate the high-level idea, while the actual construction is slightly more complicated. We use two asymptotically good error-correcting codes, $\Cone$ and $\Ctwo$. $\Cone\subseteq \{0,1\}^s$ is a binary code, which has codeword length $s$ and distance $s/4$. Note that $\Cone$ is a subset of $\{0,1\}^s$. $\Ctwo\subseteq (\Cone)^r$ uses $\Cone$ as its alphabet, and it has codeword length $r$ and distance $r/2$. Since both $\Cone$ and $\Ctwo$ are asymptotically good, we have $|\Cone| = 2^{\Omega(s)}$ and $|\Ctwo| = 2^{\Omega(r\cdot s)}$.

We assume Alice and Bob each holds a codeword of $\Ctwo$, denoted by $A$ and $B$. The first step is to construct a $s\times 2r$ weighted directed grid graph using $A$ and $B$. The grid has $2r$ columns and each column has $s$ nodes. In this graph, each node has two outgoing edges, one connecting to the node on its right, and the other connecting to the node below. The graph has the following properties. All edges going downward has weight 1, and all edges going from an even column to an odd column has weight 1. For edges going from an odd column to an even column, the weights depend on $A$ and $B$. Specifically, for an edge going from the $j$-th node in the $(2i-1)$-th column to the $j$-th node in the $2i$-th column, it has weight 1 if $(A_i)_j \ne (B_i)_j$ and 0 otherwise. Here, $A_i$ is the $i$-th symbol of $A$, which is a codeword of $\Cone$; and $(A_i)_j$ the $j$-th symbol of $A_i$, which is in $\{0,1\}$. The same notation applies to $(B_i)_j$. 

If $A = B$, then any edge going from an odd column to its right have weight 0. Otherwise, by the distance property of our codes, there are at least $r/2$ odd columns such that for each column, at least 1/4 of the edges going out of this column to its right must have weight 1. This is because the code $\Ctwo$ has distance $r/2$ and $\Cone$ has distance $s/4$.   

We now transform the graph into a 2-party communication problem where Alice holds the odd columns and Bob holds the even columns. Their goal is to approximate the largest weight of any path going from the top-left node to the bottom-right node. We can use a combinatorial argument based on the Erd\"{o}s-Szekeres theorem to show that if $A = B$, the largest weight of any path is a fixed number; however, if $A \ne B$, the largest weight of any path increases by a constant factor.  Thus, the code $\Ctwo$ gives a fooling set for this problem and any deterministic algorithm that solves the problem needs communication complexity at least $\log(|\Ctwo|) = \Omega(r\cdot s)$. 

Finally, we reduce this problem to approximating $\lis$-length in type 2 order, by assigning appropriate values to each node of the grid graph and reordering the nodes into a sequence. The high-level idea is that for each path in the grid graph, we can find a sequence whose $\lis$-length is equal to the weight of the path. This gives our space lower bound in type 2 order.





\subsection{Related work}
There have also been several works studying the “complement” problem of $\lis$, namely approximating the distance to
monotonicity, i.e., $d_m(x) = n - \lis(x)$, in both the sublinear-time and the streaming settings (note that computing $d_m(x)$ exactly is equivalent to computing $\lis(x)$). For time complexity, this was first studied by \cite{Ailon2007EstimatingTD}, and a subsequent work of \cite{doi:10.1137/130942152} gave a $(1+\eps)$ approximation algorithm in time $\mathsf{poly}(1/d_m(x), \log n)$ for any constant $\eps>0$.

In the streaming setting, \cite{DBLP:conf/soda/GopalanJKK07} and \cite{ergun2008distance} gave algorithms that achieve $O(1)$ approximation of $d_m(x)$ using $\mathsf{polylog}(n)$ space. This was later improved by \cite{saks2013space} to achieve  a randomized $(1+\eps)$-approximation algorithm using $\mathsf{polylog}(n)$ space, and further by \cite{doi:10.1137/1.9781611973730.83} to achieve a deterministic $(1+\eps)$-approximation algorithm using $\mathsf{polylog}(n)$ space. \cite{doi:10.1137/1.9781611973730.83} also proved streaming space lower bound of $\Omega(\log^2 n/\eps)$ and $\Omega(\log^2 n/(\eps \log\log n))$ for deterministic and randomized $(1+\eps)$-approximation of $d_m(x)$, respectively. 

$\lis$ has also been recently studied in the Massively Parallel Computation (MPC) model and  the fully dynamic model. In the former, \cite{im2017efficient} gave a $O(1/\eps^2)$-round algorithm that achieves $(1+\eps)$-approximation of $\lis$-length, as long as each machine uses space $n^{3/4+\Omega(1)}$. In the latter, a sequence of works \cite{DBLP:conf/stoc/MitzenmacherS20, DBLP:conf/stoc/GawrychowskiJ21, DBLP:conf/stoc/KociumakaS21} resulted in an exact algorithm \cite{DBLP:conf/stoc/KociumakaS21} with sublinear update time, together with a deterministic $1+o(1)$-algorithm algorithm with update time $n^{o(1)}$.

\section{Open Problems}
Our work leaves several natural open problems. First, can one get better lower bounds for the query-once model/read-once branching programs? Can one generalize the bounds to query-$k$ times/read-$k$ times branching programs? Can one get any lower bounds for randomized or non-deterministic branching programs? How about lower bounds for approximation?

Second, can one get any lower bounds for randomized algorithms approximating $\lis$-length in the streaming model? Finally, is our conjecture in the streaming model true, or is there any streaming order where one can get a constant factor deterministic approximation of $\lis$-length that uses $o(\sqrt{n})$ space?

\ifnum\cocoon=0
\paragraph{Organization of the paper.} The rest of the paper is organized as follows. After some preliminaries in \cref{sec:prelim}, we give our lower bounds in the query-once model in \cref{sec:queryonce}, and our lower bounds in the streaming model in \cref{sec:stream}.
\fi

\section{Preliminaries}

\label{sec:prelim}

\paragraph{Notations.}

We use $[n]$ to denote the set of all positive integers that are at most $n$, and use $[a, b]$ to denote the set of all integers that are at least $a$ and at most $b$. 

For any set of indices $I  \subseteq [n]$, we use $x|_I$ to denote the subsequence of $x$ restricted to the indices in $I$.  In other words, assume $I = \{i_1, i_2, \dots, i_t\}$ and $1\le i_1 < i_2 < \cdots < i_t \le n$, then  $x|_I = x_{i_1}x_{i_2}\cdots x_{i_t}$.

\paragraph{Branching program. }

The following definition is from \cite{beame1989general}. An $R$-way branching program consists of a directed acyclic rooted graph of out-degree $R=R(n)$ with each non-sink node labelled by an index from $\{1,2,\dots, n\}$ and with $R$ out-edges of each node labelled $1,\dots, R $. Edges of the branching program may also be labelled by a sequence of values from some output domain. The \emph{size} of a branching program is the number of nodes it has. 

Let $x=(x_1, x_2, \dots, x_n)$ be an $n$-tuple of integers chosen from the range $[R]$. An $R$-way branching program computes a function of input $x$ as follows. The computation starts at the root of the branching program. At each non-sink node $v$ encountered, the computation follows the out edge labelled with the value of $x_i$ where $i$ is the index that labels node $v$ (i.e. variable $x_i$ is queried at $v$). The computation terminates when it reaches a sink node. The sequence of nodes and edges encountered is the \emph{computation path} followed by $x$. The output of the branching program on input $x$ is determined by the sink node the computation path ends in. 

The \emph{time} used by a branching program is the length of the longest computation path followed by any input. The \emph{space} used by a branching program is the logarithm base 2 of its size. 

An $R$-way branching program is \emph{levelled} if the nodes of the underlying graph are assigned levels so that the root has level $0$ and the out edges of a node at level $l$ only go to nodes at level $l +1$.  It is known that any branching program can be levelled without changing its time and with at most squaring its size \cite{pippenger1979simultaneous}. This will not change the time used and will increase the space complexity by a factor of 2 at most. Thus we can assume $R$-way branching programs are levelled without loss of generality. If the branching program has the additional property that in any computation path, any $x_i$ is queried at most once, then it is called a read-once branching program.

\paragraph{The log sum inequality.}

In \cref{sec:lb_type_2}, we will use the following inequality.

\begin{lemma}[The log sum inequality]
\label{lem:log-sum}
Given $2n$ positive numbers $a_1, a_2, \dots, a_n $ and $b_1, b_2, \dots, b_n$, let $a = \sum_{i=1}^n a_i$ and $b = \sum_{i=1}^n b_i$, then we have
\[ \sum_{i=1}^n a_i\log \frac{a_i}{b_i}\ge a\log \frac{a}{b}.\]

\end{lemma}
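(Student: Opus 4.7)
The plan is to derive the inequality from the convexity of the function $g(t) = t\log t$ on $(0,\infty)$, via Jensen's inequality with weights proportional to the $b_i$'s. This is the standard route, and given the elementary statement I do not anticipate any substantive obstacle; the only thing to be careful about is bookkeeping of the weights so that Jensen applies cleanly.

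First I would recall (or verify by a second-derivative computation) that $g(t) = t\log t$ is convex on $(0,\infty)$, since $g''(t) = 1/(t \ln 2) > 0$ (with natural log it is $1/t$; the constant is immaterial for convexity). Then I would introduce the probability weights $w_i = b_i / b$, which are positive and sum to $1$ by definition of $b$. The key algebraic rewrite is
\[
\sum_{i=1}^n a_i \log \frac{a_i}{b_i} \;=\; \sum_{i=1}^n b_i \cdot \frac{a_i}{b_i} \log \frac{a_i}{b_i} \;=\; b \sum_{i=1}^n w_i \, g\!\left(\frac{a_i}{b_i}\right).
\]

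Now I would apply Jensen's inequality to the convex function $g$ with the weights $w_i$:
\[
\sum_{i=1}^n w_i \, g\!\left(\frac{a_i}{b_i}\right) \;\ge\; g\!\left(\sum_{i=1}^n w_i \cdot \frac{a_i}{b_i}\right) \;=\; g\!\left(\frac{1}{b}\sum_{i=1}^n a_i\right) \;=\; g\!\left(\frac{a}{b}\right) \;=\; \frac{a}{b}\log\frac{a}{b}.
\]
Multiplying through by $b$ yields exactly the claimed inequality. As a sanity check, equality holds precisely when $a_i/b_i$ is constant across $i$, which matches the standard equality condition for the log sum inequality.

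The main (and essentially only) thing to watch is making sure the weights $w_i = b_i/b$ are legitimate probability weights, which is immediate from the positivity assumption on the $b_i$'s, and making sure the argument of Jensen's inequality is evaluated correctly so that the weighted average $\sum_i w_i (a_i/b_i)$ telescopes to $a/b$. No non-routine step is expected.
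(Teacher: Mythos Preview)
Your proof is correct; the Jensen-with-weights-$b_i/b$ argument applied to $t\log t$ is the standard derivation of the log sum inequality, and you have handled the bookkeeping cleanly. The paper itself states this lemma as a known preliminary and does not supply a proof, so there is nothing to compare against.
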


\paragraph{Communication Complexity and fooling set.} 

We will consider the $2$-party communication model where $2$ players Alice and Bob each holds input $x\in X$, $y \in Y$ respectively. The goal is to compute a function $f: X\times Y \rightarrow \{0,1\} $. We define the \emph{deterministic communication complexity} of $f$ in this model as the minimum number of bits required to be sent by the players in every deterministic communication protocol that always outputs a correct answer. Correspondingly, the \emph{randomized communication complexity} (denoted by $R_{\epsilon}(f)$) is the minimum number of bits required to be sent by the players in every randomized communication protocol that can output a correct answer with probability at least $1-\epsilon$.

Some of our proofs use the classical fooling set argument. Fooling set is defined as following.

\begin{definition}[Fooling set]
Let $f: X\times Y \rightarrow \{0,1\} $. A subset $S\subseteq X\times Y$ is a fooling set for $f$ if there exists $z\in \{0,1\}$ such that $\forall (x, y) \in S$, $f(x,y) = z$ and for any two distinct $ (x_1,y_1), (x_2, y_2)\in S$, either $f(x_1,y_2) \ne z$ or $f(x_2, y_1) \ne z$. 
\end{definition}

We have the following Lemma.

\begin{lemma} [Corollary 4.7 of \cite{roughgarden2016communication}]
    If $f$ has a fooling set $S$ of size $t$, then the deterministic communication complexity of $f$ is at least $\log_2 t$.
\end{lemma}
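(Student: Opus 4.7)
The plan is to follow the standard rectangle-covering argument. First I would recall the well-known structural fact that a deterministic protocol $\Pi$ exchanging at most $c$ bits induces a partition of the input space $X\times Y$ into at most $2^c$ combinatorial rectangles, one for each leaf of the protocol tree, and that each such rectangle must be monochromatic with respect to $f$ (since all inputs leading to the same leaf produce the same output). This is the one nontrivial external fact I will use; I would either cite it or sketch it in one line by observing that on any transcript the set of Alice-inputs producing that transcript is independent of Bob's input, and vice versa.

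Next I would use the defining property of the fooling set to show that any $f$-monochromatic rectangle $A\times B$ with value $z$ contains at most one element of $S$. Suppose for contradiction that two distinct pairs $(x_1,y_1),(x_2,y_2)\in S$ both lie in $A\times B$. Then $x_1,x_2\in A$ and $y_1,y_2\in B$, so the rectangle also contains $(x_1,y_2)$ and $(x_2,y_1)$. Monochromaticity forces $f(x_1,y_2)=f(x_2,y_1)=z$, which directly contradicts the fooling set condition that at least one of these two values must differ from $z$.

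Combining the two pieces, since every element of $S$ has $f$-value $z$ and hence lies in some $z$-labeled rectangle, and each such rectangle contains at most one element of $S$, we need at least $|S|=t$ distinct rectangles. Any protocol therefore has at least $t$ leaves, giving $2^c\ge t$ and hence communication at least $\log_2 t$, as claimed.

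The step I expect to be most delicate is getting the rectangle-structure fact cleanly stated, since strictly speaking it is the substantive content of the lemma; everything else is a short contradiction argument from the definition of a fooling set. If the paper prefers a self-contained write-up, I would include a two-line justification of the rectangle property by induction on the length of the transcript, noting that the set of $(x,y)$ reaching a fixed internal node is a rectangle and that each communication step splits such a rectangle into two subrectangles based on whose turn it is to speak.
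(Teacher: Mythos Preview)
Your proposal is correct and is exactly the standard fooling-set argument. Note, however, that the paper does not actually prove this lemma: it is stated as a citation (Corollary~4.7 of \cite{roughgarden2016communication}) and used as a black box, so there is no proof in the paper to compare against. Your write-up would serve perfectly well as a self-contained justification if one were desired.
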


\paragraph{Set-disjointness Problem.}

We will use some classical results about the communication complexity of 2 party set-disjointness problem. 2 party set-disjointness problem is defined as following. Assuming there are two parties, Alice and Bob, each of them holds a $k$-subset of $[n]$, i.e. Alice holds a set $A\subseteq [n]$ with $|A| = k$ and Bob holds another set $B\subseteq [n]$ with $|B| = k$. Alice and Bob wants to compute whether $A$ and $B$ are disjoint ($\disjnk$). Here, 

\begin{equation*}
    \disjnk(A,B) = \begin{cases}
    1, \text{ if } A\cap B = \emptyset, \\
    0, \text{ if } A\cap B \ne \emptyset.
    \end{cases}
\end{equation*}

We have the following result about the randomized communication complexity of set-disjointness.

\begin{theorem}
[Theorem 1.2 of \cite{haastad2007randomized}]
\label{thm:cc_lb_set_disj}
For any $c<1/2$, the randomized communication complexity $R_{1/3}(\disjnk) = \Omega(k)$ for every $k\le cn$.
\end{theorem}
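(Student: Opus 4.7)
\medskip

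The plan is to prove this via the information complexity framework of Bar-Yossef–Jayram–Kumar–Sivakumar, which is the approach that most cleanly gives an $\Omega(k)$ bound whenever $k$ is a constant fraction of $n$. First, I would decompose $\disjnk$ coordinate-wise: writing $a_i = \mathbb{1}[i\in A]$ and $b_i = \mathbb{1}[i\in B]$, one has $\disjnk(A,B)=1$ iff $\bigvee_{i\in [n]}(a_i\wedge b_i)=0$. So it suffices to lower bound the randomized complexity of computing the $n$-fold $\mathrm{OR}$ of these bitwise ANDs on inputs whose Hamming weights are exactly $k$.

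Second, I would set up a ``collapsing'' input distribution $\mu$ on a single pair $(a_i,b_i)\in\{0,1\}^2$ supported on $\{(0,0),(0,1),(1,0)\}$, so that $a_i\wedge b_i=0$ almost surely. The extension to the full input $(A,B)$ is $\mu^{\otimes n}$, conditioned on $|A|=|B|=k$; this conditioning is where the hypothesis $k\le cn$ with $c<1/2$ is used, to guarantee that a constant fraction of $\mu^{\otimes n}$-mass survives the conditioning and that each single coordinate's marginal is only perturbed by $O(1/n)$. Third, I would prove the single-coordinate lower bound: any protocol $\Pi$ that computes $\mathrm{AND}$ correctly on a planted input $(1,1)$ (versus the base distribution $\mu$) must leak $\Omega(1)$ bits of information about $(a_i,b_i)$ when inputs are drawn from $\mu$. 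The standard route here is via the Hellinger distance between the transcript distributions $\Pi(0,1)$ and $\Pi(1,0)$: the cut-and-paste / rectangle property of randomized protocols forces $h(\Pi(0,1),\Pi(1,0))\cdot h(\Pi(1,1),\cdot)$ to be bounded, and the triangle-like inequality of Hellinger distance converts correctness on $(1,1)$ into mutual information $\Omega(1)$ via Pinsker-type bounds.

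Fourth, the direct-sum property of (conditional) information cost gives
\[
\mathrm{IC}_\mu(\disjnk)\;\ge\;\sum_{i=1}^n \mathrm{IC}_\mu(\mathrm{AND}_i)\;=\;\Omega(n),
\]
and since communication upper bounds information cost, any constant-error protocol for $\disjnk$ uses $\Omega(n)$ bits. To refine this to $\Omega(k)$ when $k\le cn$, I would observe that the hard instances are already supported on $k$-subsets (after the conditioning in step two), so the lower bound is really $\Omega(n)=\Omega(k)$ for $k=\Theta(n)$; for smaller $k$, I would embed a $\disjnk$ instance of dimension $n'=\Theta(k)$ into the full universe $[n]$ by fixing the remaining $n-n'$ coordinates to $0$ on both sides, yielding the $\Omega(k)$ bound uniformly in the regime $k\le cn$.

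The main obstacle I expect is step two combined with step three: the clean single-coordinate Hellinger lower bound is for the unrestricted product distribution $\mu^{\otimes n}$, whereas the theorem is about $k$-subsets, so the conditioning on $|A|=|B|=k$ breaks independence across coordinates. Handling this correctly requires the conditional information cost machinery (conditioning on an auxiliary random variable $D$ that makes the coordinates independent given $D$, as in BJKS), and verifying that the condition $k<n/2$ (strictly) leaves enough slack so that the conditional marginals at each coordinate still dominate a constant multiple of the original $\mu$. Everything else — the direct sum, Hellinger manipulations, and translation from information cost to communication — is by now standard and would be cited rather than reproven.
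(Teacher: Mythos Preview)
The paper does not prove this theorem: it is quoted as Theorem~1.2 of H\aa stad--Wigderson~\cite{haastad2007randomized} in the preliminaries and invoked once, as a black box, in the reduction from set-disjointness to exact $\lis$. There is no proof in the paper for your proposal to be compared against.

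For what it is worth, your sketch follows the Bar-Yossef--Jayram--Kumar--Sivakumar information-complexity route rather than the direct argument of~\cite{haastad2007randomized}. One point to tighten: your embedding step reduces $\disjnk$ to $\disj_{n',k}$ with $n'=\Theta(k)$, but the inputs there are still promised to be exactly $k$-subsets, so you cannot simply invoke the unrestricted $\Omega(n')$ bound, whose hard product distribution $\mu^{\otimes n'}$ is not supported on weight-$k$ inputs. The embedding normalizes $k$ against the universe size but does not remove the weight promise; you still need either the conditioning you yourself flagged as the main obstacle, or a hard distribution (as in Razborov's original proof) that already lives on balanced sets.
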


\paragraph{Erd\"{o}s-Szekeres theorem. }

\begin{theorem}[Erd\"{o}s-Szekeres theorem]
\label{thm:erdos-szekeres}
    Given two integers $r>0$ and $s > 0$, any sequence of distinct real numbers with length at least $(r-1)(s-1) + 1$ contains a monotonically increasing subsequence of length $r$ or a monotonically decreasing subsequence of length $s$.
\end{theorem}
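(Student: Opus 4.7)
The plan is to follow the classical pigeonhole proof due to Seidenberg. Given a sequence $a_1, a_2, \ldots, a_N$ of distinct reals with $N \geq (r-1)(s-1)+1$, I will associate to each index $i$ a pair of statistics $(u_i, v_i)$, where $u_i$ is the length of the longest strictly increasing subsequence ending at $a_i$ and $v_i$ is the length of the longest strictly decreasing subsequence ending at $a_i$. Each $u_i, v_i$ is at least $1$ (taking the singleton $a_i$).

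Suppose for contradiction that there is neither an increasing subsequence of length $r$ nor a decreasing subsequence of length $s$. Then every $u_i$ lies in $\{1, 2, \ldots, r-1\}$ and every $v_i$ lies in $\{1, 2, \ldots, s-1\}$, so the pair $(u_i, v_i)$ takes at most $(r-1)(s-1)$ distinct values. Since $N \geq (r-1)(s-1)+1$, by the pigeonhole principle there exist indices $i < j$ with $(u_i, v_i) = (u_j, v_j)$.

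The key step is then to derive a contradiction from this collision. Since the $a_i$ are distinct, either $a_i < a_j$ or $a_i > a_j$. In the first case, prepending a longest increasing subsequence ending at $a_i$ with $a_j$ appended yields an increasing subsequence ending at $a_j$ of length $u_i + 1$, so $u_j \geq u_i + 1 > u_i$, contradicting $u_i = u_j$. In the second case, the analogous argument on decreasing subsequences gives $v_j \geq v_i + 1 > v_i$, again a contradiction. Hence one of the two monotonic subsequences must exist, completing the proof.

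I do not anticipate any real obstacle: the argument is entirely elementary and relies only on pigeonhole plus the simple monotone extension observation. The only subtle point to get right in writing is the use of \emph{strict} increase/decrease, which is valid because the elements are assumed distinct; this ensures the cases $a_i < a_j$ and $a_i > a_j$ are exhaustive and the one-step extensions are legitimate.
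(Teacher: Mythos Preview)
Your argument is the standard Seidenberg pigeonhole proof and is entirely correct; the only cosmetic slip is the phrase ``prepending \ldots\ with $a_j$ appended,'' where you simply mean that appending $a_j$ to a longest increasing subsequence ending at $a_i$ gives an increasing subsequence of length $u_i+1$ ending at $a_j$. Note that the paper does not actually supply a proof of this theorem: it is merely quoted in the preliminaries as a classical result, so there is no ``paper's own proof'' to compare against.
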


\section{Computing LIS in the Query-Once Model}\label{sec:queryonce}

In this section, we consider the \emph{query-once model} where algorithms only allowed to access each symbol of the input sequence once. We show that in this model, any algorithm that computes $\lis$ exactly must use $\Omega(n)$ space.  
\ROBPThm*

We introduce some notations. For any set of indices $S\subseteq [n]$, the function $f_S$ takes two sequences $x, y\in \Sigma^n$ as input and outputs a sequence $\sigma$ such that $\sigma|_S = x|_S$ and $\sigma|_{[n]\setminus S} = y|_{[n]\setminus S}$. So if $\sigma = f_S(x,y)$, then for any $i\in [n]$, 

\begin{equation*}
    \sigma_i = \begin{cases}
    x_i, \text{ if } i\in S,\\
    y_i, \text{ if } i\notin S.
    \end{cases}
\end{equation*}

We now show the following 2 claims.

\begin{claim}
\label{clm:query-once1}
Assume the alphabet size $m > c \cdot n$ for some large enough constant $c$, and $x,y\in [m]^n$ are two increasing sequences. Then $\forall \ S\subset [n]$ with $|S| = n/5$ such that $x|_S \neq y|_S$, there is a sequence $z\in [0, m+1]^n$ such that 
\[\lis(f_S(x,z)) \neq \lis (f_S(y,z)).\]
\end{claim}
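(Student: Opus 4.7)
To prove the claim I would proceed as follows. Let $a_i = x_{s_i}$ and $b_i = y_{s_i}$ for $i \in [n/5]$, where $S = \{s_1 < s_2 < \cdots < s_{n/5}\}$; let $i^*$ be the smallest index with $a_{i^*} \neq b_{i^*}$, and by symmetry of the statement (swapping the roles of $x$ and $y$) I may assume $a_{i^*} < b_{i^*}$. Write $T = [n] \setminus S$ and let $G_i = T \cap (s_i, s_{i+1})$ denote the gaps between consecutive $S$-positions (with $s_0 = 0$ and $s_{n/5+1} = n+1$).

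The central idea is to place a carefully chosen ``bonus'' value at a $T$-position adjacent to $s_{i^*}$ that can be legally inserted into $y$'s LIS but only creates a repeated value in $x$'s. Assuming the generic situation that $G_{i^*-1}$ (or $G_0$ when $i^* = 1$) is nonempty, I would pick any $j_1$ in that gap and define $z_{j_1} = a_{i^*}$, set $z_j = 0$ for the remaining $j \in T$ with $j < s_{i^*}$, and set $z_j = m+1$ for all $j \in T$ with $j > s_{i^*}$. Values of $z$ on $S$-positions are irrelevant since $f_S$ ignores them.

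The two LIS values are then computed explicitly. The sequence $f_S(x, z)$ takes values in $\{0, a_1, \ldots, a_{n/5}, m+1\}$ and contains two copies of $a_{i^*}$ (at $j_1$ and $s_{i^*}$), so $\lis(f_S(x, z)) \le n/5 + 2$, with equality realized by $0, a_1, \ldots, a_{n/5}, m+1$ using one copy of $a_{i^*}$. In contrast, for $f_S(y, z)$, the inequalities $a_{i^*-1} = b_{i^*-1} < a_{i^*} < b_{i^*}$ together with the position ordering $s_{i^*-1} < j_1 < s_{i^*}$ allow inserting $a_{i^*}$ (from $j_1$) strictly between $b_{i^*-1}$ and $b_{i^*}$, yielding the strictly increasing subsequence $0, a_1, \ldots, a_{i^*-1}, a_{i^*}, b_{i^*}, b_{i^*+1}, \ldots, b_{n/5}, m+1$ of length $n/5 + 3$. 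Therefore $\lis(f_S(y, z)) > \lis(f_S(x, z))$. Boundary effects (when $G_0$ or $G_{n/5}$ happens to be empty) decrease both LIS values by the same additive constant, preserving the strict gap.

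The main obstacle is the degenerate case where $G_{i^*-1}$ is empty, meaning $s_{i^*-1}$ and $s_{i^*}$ are consecutive integers (or $s_1 = 1$ when $i^* = 1$). I would handle this by a short subcase analysis: first, try the symmetric construction placing $z_{j_2} = b_{i^*}$ at some $j_2 \in G_{i^*}$, which gives the extra element to $x$'s side whenever $b_{i^*} < a_{i^*+1}$; otherwise the inequality $b_{i^*} \geq a_{i^*+1}$ combined with strict monotonicity of $y|_S$ forces $b_{i^*+1} > a_{i^*+1}$, so $i^*+1$ is also a differing index and the (now nonempty) gap $G_{i^*}$ plays the role of the left-adjacent gap at $i^*+1$, allowing the Case~A construction to be reapplied. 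The main technical difficulty in this iterated case is verifying that the preexisting mismatch at $i^*$ does not absorb or cancel the gap-of-one created at $i^*+1$, which reduces to careful bookkeeping of distinct values in the two merged sequences.
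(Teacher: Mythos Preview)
Your generic case is correct and elegant: when $G_{i^*-1}$ is nonempty, placing the single bonus value $a_{i^*}$ there does create the required gap of exactly one, and your distinct-values bound together with the explicit subsequence cleanly pins down both LIS values.

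However, your handling of the degenerate case has a genuine gap. Suppose $G_{i^*-1}=\emptyset$, $G_{i^*}\neq\emptyset$, and $b_{i^*}\ge a_{i^*+1}$, so you recurse to index $i^*+1$ and place $a_{i^*+1}$ at some $j_1\in G_{i^*}$. In $f_S(y,z)$ the relevant stretch is $b_{i^*}$ at position $s_{i^*}$, then $a_{i^*+1}$ at $j_1$, then $b_{i^*+1}$ at $s_{i^*+1}$. Because $b_{i^*}\ge a_{i^*+1}$, you \emph{cannot} insert $a_{i^*+1}$ between $b_{i^*}$ and $b_{i^*+1}$; the best increasing subsequence using the bonus value must drop $b_{i^*}$, yielding $0,b_1,\dots,b_{i^*-1},a_{i^*+1},b_{i^*+1},\dots,b_{n/5},m{+}1$ of length $n/5+2$. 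But $f_S(x,z)$ also has LIS exactly $n/5+2$ (its distinct values are still $\{0,a_1,\dots,a_{n/5},m{+}1\}$). So the gap vanishes---this is not a bookkeeping issue but a structural one: the value $b_{i^*}$ sitting at $s_{i^*}<j_1$ blocks the very insertion your construction relies on. Your writeup also leaves unaddressed the case where both $G_{i^*-1}$ and $G_{i^*}$ are empty, and the boundary $i^*=n/5$.

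The underlying problem is that your argument never invokes the hypothesis $|S|=n/5$; a single bonus value is too fragile to survive adversarial placement of $S$ and adversarial relations among the $a_k,b_k$. The paper's proof exploits $|S|=n/5$ directly: since $|T|=4n/5$, at least $2n/5$ unfixed positions lie on one side of the first (or last) differing index $l$. Into those positions it writes, in increasing order, the entire set $E$ of values from $x|_S\cup y|_S$ exceeding $x_l$ (of size at most $2n/5-1$), then pads with $m{+}1$. This forces the LIS of the suffix after $l$ to be exactly $|E|+1$ for \emph{both} sequences regardless of how the $S$-positions are interspersed, so the only difference is whether the single value $x_l$ (present in $\tilde x$ but not in $\tilde y$) can be prepended. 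Flooding the unfixed positions with all relevant values is what eliminates the adjacency and ordering obstructions that your single-value construction runs into.
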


\begin{proof}[Proof of Claim~\ref{clm:query-once1}]

In other words, the claim says if two increasing sequences $x$ and $y$ are not equal on a subset of indices $S$ with $|S| = n/5$, then there is a way to build two sequences $\Tilde{x},\Tilde{y}\in [0,m+1]^n$ with $\Tilde{x}|_S = x|_S$, $\Tilde{y}|_S = y|_S$ and $\Tilde{x}|_{[n]/S} = \Tilde{y}|_{[n]/S}$ such that $\lis(\Tilde{x})\neq \lis(\Tilde{y})$.

Let us fix index set $S\in [n]$ and increasing sequences $x,y$ such that $x|_S \neq y|_S$. Our goal is two build $\Tilde{x}$ and $\Tilde{y}$ satisfying the above properties. We first fix $\Tilde{x}|_S = x|_S$, $\Tilde{y}|_S = y|_S$ and say that the positions not in $S$ are unfixed. 

Let $l \in S $ be the first index that $x_l\neq y_l $ and $r$ be the last position that $x_r\neq y_r$. There are two cases. First, there are at least $2n/5$ unfixed positions after $l$ (with index larger than $l$). Second, there are less than $2n/5$ unfixed positions after $l$. 

For the first case, without loss of generality, we assume $x_l < y_l$. Let $E$ be the set of symbols that appeared in $x|_S$ or $y|_S$ with value larger than $x_l$, i.e. 
\[E = \{a \; | \; a > x_l  \text{ and } a = x_i \text{ or } a = y_i \text{ for some } i \in S\}\]

Let $u_E$ be the sequence of all elements in $E$ concatenated in the increasing order. Since $|S| = n/5$, we have $|E| = |u_E| \le 2n/5-1$. We can build $\Tilde{x}$ and $\Tilde{y}$ as follows. First, we put $0$ into all unfixed positions before $l$ (if any). Then, for unfixed positions after $l$, we first put $u_E$ into the unfixed positions (notice that we assume there are at least $2n/5$ unfixed positions after $l$ and $|u_E|\le 2n/5-1$) and for the remaining unfixed positions, we put $m+1$ into them. 

We argue that $\lis(\Tilde{x})\neq \lis(\Tilde{y})$. To see this, notice that $\Tilde{x}$ and $\Tilde{y}$ are equal before $l$-th position and all symbols before $l$-th position are smaller than $x_l$. Also notice that, in both $\Tilde{x}|_{[l+1:n]}$ and $\Tilde{y}|_{[l+1:n]}$, there are exactly $|E|+1$ distinct symbols (symbols in $E$ plus the symbol $m+1$) and we can find an increasing susbequence of length $|E|+1$ ($u_E$ plus the symbol $m+1$). Since all symbols after $l$-th position are larger than $x_l$ (in both $\Tilde{x}$ and $\Tilde{y}$). For $\Tilde{x}$, the longest increasing subsequence is the longest increasing subsequence before $l$-th position plus the symbol $x_l$, and plus the longest increasing subsequence after $l$-th position. For $\Tilde{y}$, the longest increasing subsequence we can find is similar except we can not include the symbol $x_l$ (which does not appear in $\Tilde{y}$). Thus, we have  $\lis(\Tilde{x}) =  \lis(\Tilde{y}) + 1$.

For the second case, the argument is symmetric. There are $4n/5$ unfixed positions. If there are less than $2n/5$ unfixed positions after $l$. There are at least $2n/5$ unfixed positions before $r$. Without loss of generality, we assume $x_l > y_l$ and let $E$ be the set of symbols that appeared in $x|_S$ or $y|_S$ with value smaller than $x_r$. Similarly, let $u_E$ be the sequence of all elements in $E$ concatenated in the increasing order.

We can build $\Tilde{x}$ and $\Tilde{y}$ as follows. First, we put $m+1$ into all unfixed positions after $r$ (if any). Then, for unfixed positions before $r$, we first put a $0$ in the first unfixed position and then put  $u_E$ into the unfixed positions. If there are any remaining unfixed positions, we put $0$ into them. By a similar analysis in the first case, we can show that $\lis(\Tilde{x}) =  \lis(\Tilde{y}) + 1$. This finishes the proof.

\end{proof}

\begin{claim}
\label{clm:query-once2}
Assume the alphabet size $m > c \cdot n$ for some large enough constant $c$, then there exists a set $T$ of increasing subsequences in $[m]^n$ with size $2^{\Omega(n)}$, such that $\forall \ S\subset [n]$ with $|S| = n/5$ and $\forall \ x, y\in T$, we have $x|_S \neq y|_S$.
\end{claim}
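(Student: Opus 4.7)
The plan is a straightforward probabilistic argument: sample $N = 2^{\alpha n}$ increasing sequences independently from a well-chosen distribution, and then union bound over all pairs of samples and all subsets $S$ of size $n/5$ to show that with positive probability no two distinct sampled sequences agree on any such $S$. Any realization of this event yields the desired set $T$ of size $N = 2^{\Omega(n)}$.

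The key design choice is to make the symbols at distinct positions \emph{independent} while still forcing the sequence to be increasing. Since $m \ge c n$ for a large constant $c$, partition $[m]$ into $n$ consecutive blocks $B_1 < B_2 < \cdots < B_n$, each of size at least $c$. Draw a random sequence $x \in [m]^n$ by independently picking $x_i$ uniformly from $B_i$ for each $i \in [n]$; since the blocks are monotonically arranged, $x$ is automatically strictly increasing, and the symbols at different positions are independent.

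For any fixed subset $S \subseteq [n]$ with $|S| = n/5$ and two independent samples $x, y$ from this distribution, independence across positions gives
\[
\Pr[\, x|_S = y|_S \,] \;=\; \prod_{i \in S} \Pr[x_i = y_i] \;=\; \prod_{i \in S} \frac{1}{|B_i|} \;\le\; c^{-n/5}.
\]
A union bound over the $\binom{N}{2} \le N^2$ unordered pairs of samples and the $\binom{n}{n/5} \le 2^{H(1/5)\,n}$ choices of $S$ (where $H(\cdot)$ is the binary entropy function, and $H(1/5) < 0.73$) bounds the failure probability by
\[
N^2 \cdot 2^{H(1/5)\,n} \cdot c^{-n/5} \;=\; N^2 \cdot 2^{\,(H(1/5) - \tfrac{1}{5}\log_2 c)\,n}.
\]
Choosing $c$ large enough (any $c \ge 32$ suffices, giving $\tfrac{1}{5}\log_2 c \ge 1 > H(1/5)$) makes the exponent on $n$ negative, so we may take $N = 2^{\alpha n}$ for a small constant $\alpha > 0$ and still have the expression strictly less than $1$. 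Hence with positive probability the sampled family has the required property, yielding a valid $T$ of size $2^{\Omega(n)}$.

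There is essentially no serious obstacle here: the only thing to be careful about is choosing a distribution on increasing sequences under which positions behave independently, so that the per-$S$ collision probability decays like $c^{-|S|}$ and beats the entropy factor $\binom{n}{n/5}$ from the union bound. The block-wise uniform construction above accomplishes exactly this, and the rest is a routine counting/union-bound calculation that succeeds as soon as the alphabet constant $c$ in the hypothesis $m > cn$ is taken sufficiently large.
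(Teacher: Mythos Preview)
Your proof is correct, and it takes a genuinely different route from the paper's own argument. The paper samples $x$ and $y$ uniformly from the set of \emph{all} increasing sequences in $[m]^n$; because the coordinates of a uniform increasing sequence are dependent, bounding $\Pr[x|_S = y|_S]$ requires a nontrivial calculation via the log-sum inequality (Lemma~\ref{lem:log-sum}) to control the product $\prod_l \binom{a_l}{d_l}$ that counts increasing completions with fixed values on $S$. Your block construction sidesteps this entirely: by forcing $x_i \in B_i$ with disjoint ordered blocks, you make the coordinates independent while keeping the sequence increasing, so the collision probability is an exact product $\prod_{i\in S} 1/|B_i| \le c^{-n/5}$ and the rest is a one-line union bound.

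What each approach buys: the paper's argument works with the natural uniform distribution on increasing sequences and gives a concrete size $2^{n/20}$, at the cost of the log-sum machinery. Your approach is more elementary and shorter, and arguably makes clearer why the hypothesis $m > cn$ for large $c$ is exactly what is needed (each coordinate carries $\log_2 c$ bits of entropy, which must beat the entropy $H(1/5)$ of the subset choice). One very minor point of care: you should note that the $N$ samples are automatically distinct on the good event (since agreeing everywhere would in particular mean agreeing on some $S$), so $|T| = N$; this is implicit in your union bound but worth stating.
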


\begin{proof}[Proof of Claim~\ref{clm:query-once2}]  

We prove the claim by a probabilistic argument. Let $A$ be the set of all increasing sequence in $[m]^n$ ($|A| = \binom{m}{n}$) and $x, y$ are two sequences sampled i.i.d. from $A$ uniformly. Then, consider a fixed set $S\subset [n]$ with $|S| = n/5$.  We now give an estimation of the probability of $x|_S = y|_S$. Before the computation, we introduce some notations. For simplicity,let $t = n/5$. Let $S = \{i_1, i_2, \dots, i_{t}\}$ such that $i_1 < i_2 < \cdots < i_t$. In addition, we let $i_0 = 0$ and $i_{t+1} = n+1$. For $i\in [t+1]$, let 
$$d_l= i_l-i_{l-1} -1.$$ 
And for any fixed sequence $z\in A$, we let $z_{t+1} = m+1$ and 
$$a_l(z) = z_{i_l}-z_{i_{l-1}} -1. $$
Clearly, $a_l(z) \ge d_l$ since $z$ is an increasing sequence. Also, we have $\sum_{l=1}^{t+1}d_l = n-t, $ and 
$\sum_{l=1}^{t+1}a_l(z) = m-t $.

We can write $\Pr[x|_S = y|_S] =  \frac{1}{|A|} \sum_{z\in A} \Pr[x|_S = z|_S]$. For a fixed $z$, $\Pr[x|_S = z|_S]$ is exactly the number of sequences in $A$ satisfies $x|_S = z|_S$ divided by the size of $A$. Notice that $x|_S = z|_S$ fixed $t$ positions in $x$. For any $l \in [t]$, there are $\binom{a_l(z)}{d_l}$ ways to pick symbols between $i_l$-th position and $i_{l-1}$-th position. Here, we allow  $i_l = i_{l-1} + 1$ since in this case $\binom{a_l(z)}{d_l} = 1$. For simplicity, we write $a_l= a_l(z) $ for simplicity. Thus, for fixed $z\in A$, 

\begin{align}
\label{eq1}
    \begin{split}
         \Pr[x|_S = z|_S] = & \frac{1}{|A|} \cdot \prod_{l=1}^{t+1}\binom{a_l}{d_l} \\
         \le & \frac{1}{|A|} \cdot \prod_{l=1}^{t+1}\big(\frac{e\cdot a_l}{d_l}\big)^{d_l} \\
         \le &\frac{1}{|A|} \cdot e^{n-t} \cdot \prod_{l=1}^{t+1}\big(\frac{ a_l}{d_l}\big)^{d_l} 
    \end{split}
\end{align}

By the log sum inequality (\cref{lem:log-sum}) and the fact that $\sum_{l=1}^{t+1}d_l = n-t, $ and 
$\sum_{l=1}^{t+1}a_l(z) = m-t $ , we have
\begin{align}
\label{eq2}
\begin{split}
    \log \big(\prod_{l=1}^{t+1}\big(\frac{ a_l}{d_l}\big)^{d_l}\big) &= \sum_{l=1}^{t+1}d_l\log \frac{a_l}{d_l} \\
    &= -\sum_{l=1}^{t+1}d_l\log \frac{d_l}{a_l} \\
    &\le -(n-t)\cdot \log \frac{n-t}{m-t} \\
    & = (n-t)\cdot \log \frac{m-t}{n-t} 
\end{split}
\end{align}

Combine \cref{eq1}, \cref{eq2} and $|A| = \binom{m}{n}$, we have 
\begin{align}
    \begin{split}
        \Pr[x|_S = z|_S] \le &\frac{1}{|A|} \cdot e^{n-t} \cdot \prod_{l=1}^{t+1}\big(\frac{ a_l}{d_l}\big)^{d_l} \\
        \le & \frac{1}{|A|} \cdot e^{n-t} \cdot (\frac{m-t}{n-t})^{n-t} \\
        \le & \big(\frac{n}{m}\big)^n \cdot \Big(e \cdot \frac{m-t}{n-t}\Big)^{n-t} 
    \end{split}
\end{align}

Plug in $t= n/5$, we have 
$
         \Pr[x|_S = z|_S]  \le  \big(\frac{n}{m}\big)^n \cdot \Big(e \cdot \frac{m-t}{n-t}\Big)^{n-t} \le  \big(\frac{n}{m}\big)^{n/5} \cdot \big( \frac{5e}{4}\big)^{4n/5}
$.Thus, for any two $x,y$ sampled i.i.d. from $A$, we have 

\[\Pr[x|_S = y|_S]  \le \big(\frac{n}{m}\big)^{n/5} \cdot \big( \frac{5e}{4}\big)^{4n/5}.\]


Let $I = \{S\subset[n] \text{ such that } |S| = n/5 \}$. We have 

\begin{align}
    \begin{split}
        \Pr[\exists \ S\in I, x|_S = y|_S] \le &   \sum_{S\in I} \Pr[x|_S = y|_S] \\
        \le & \binom{n}{n/5}\cdot \big(\frac{n}{m}\big)^{n/5} \cdot \big( \frac{5e}{4}\big)^{4n/5} \\
        \le & \Big( 5e\cdot \frac{n}{m}\cdot (\frac{5e}{4})^4\Big)^{n/5}.
    \end{split}
\end{align}

Taking $m = \Omega(n)$ to be large enough, we have $\Pr[\exists \ S\in I, x|_S = y|_S] \le 2^{-n/5}$.

If we take a random subset $T\subseteq A$ with size $2^{n/20}$. By a union bound, 

\begin{equation}
    \begin{split}
        \Pr[\forall \ x, y\in T \text{ and }  S\in I, x|_S\neq y|_S] \ge & 1 - \binom{|T|}{2}\cdot 2^{-n/5} > 0.
    \end{split}
\end{equation}

Thus, there exists a set $T\subset [m]^n$ of increasing sequences with size $2^{n/20}$ such that for any two sequences $x,y\in T$ and any $S\in I$, we have $x|_S\neq y|_S$. This proves the claim. 

\end{proof}

\begin{proof} [Proof of \cref{ROBPThm}]
Assume there is a levelled branching program that computes $\lis$ exactly. Let $t = n/5$, we show that the number of nodes at the $t$-th level is $2^{\Omega(n)}$. 

We consider the computation path of different input sequences. Given an input sequence $x$, we let $I_j(x)$ denote the set of positions that have been queried in the first $j$ levels by the branching program. Meanwhile, we say a sequence is even if all its symbols are even numbers. We have the following observation. 

Consider two sequences $x$ and $y$ that takes only even values, if their computation path both go through the same node in $j$-th level, then we must have $I_j(x) = I_j(y)$. To see this, assume $I_j(x) \neq I_j(y)$, let $I_j = I_j(x) \cup I_j(y)$. If $j=n$, $I_j(x) \neq I_j(y)$ means some positions in $x$ or $y$ are not queried. Say there is a position $l$ that is not queried in $x$, then fix any longest increasing subsequence of $x$, we can change $x_l$ to an odd value and increase the $\lis$ by 1. This will not influence the computation path since $x_l$ is not queried. If $1\le j < n$, we can build two new sequences $\Tilde{x}$ and $\Tilde{y}$ by letting $\Tilde{x}|_{I_j} = x|_{I_j}$, $\Tilde{y}|_{I_j} = y|_{I_j}$, and fill any positions in $[n] \setminus I_j$ with symbol 1. Then after $j$-th level, $\Tilde{x}$ and $\Tilde{y}$ will follow the same computation path since after $j$-th level, the branching program will not query any position in $I_j$ (or it will query some position in $x$ or $y$ twice). We must have $\lis(\Tilde{x}) = \lis(\Tilde{y})$. Since the $\lis$ is only determined by positions in $I_j$, we have $x|_{I_j} = y|_{I_j}$. By the same argument as the case of $j = n$, we can get a contradiction. 

Also notice that, for two even increasing sequences $x$ and $y$, if their computation path both go through the same node in $t = n/5$-th level, let $S = I_t(x)$, then we must have $x|_S = y|_S$. This follows from Claim~\ref{clm:query-once1}. To see this, if $x|_S \neq y|_S$, then according to Claim~\ref{clm:query-once1}, there exists $\Tilde{x}$ and $\Tilde{y}$ such that $\Tilde{x}|_S = x|_S$, $\Tilde{y}|_S = y|_S$, $\Tilde{x}|_{[n]/S} = \Tilde{y}_{[n]/S}$ and $\lis(\Tilde{x}) \neq \lis(\Tilde{y})$. If we use $\Tilde{x}$  as input to the branching program, in the first $t$ levels, it will follow the exact same computation path as $x$ since $\Tilde{x}|_S = x|_S$. $\Tilde{y}$ will also follow the same computation path as $y$. Thus, the computation path of $\Tilde{x}$ and $\Tilde{y}$ will collide at $t$-th level. Notice that $\Tilde{x}|_{[n]/S} = \Tilde{y}|_{[n]/S}$, after $t$-th level, $\Tilde{x}$ and $\Tilde{y}$ will follow the same computation path, which means our branching program will output the same result for $\Tilde{x}$ and $\Tilde{y}$.  However, this is contradictory to $\lis(\Tilde{x}) \neq \lis(\Tilde{y})$.

Let $T$ be a set of increasing sequences guaranteed by Claim~\ref{clm:query-once2}. We can turn it into a set of even increasing subsequences by multiplying each symbols by a factor of two. This will not affect the properties guaranteed by Claim~\ref{clm:query-once2} and will only increase the alphabet size by a factor of 2, which is still $O(n)$. We denote this new set by $T'$.  By the above observation, the computation paths of any two sequences in $T'$ must go through different nodes in $t$-th level. Thus, the $t$-th level must have $2^{\Omega(n)}$ nodes. The space used by the branching program is $\Omega(n)$.

\end{proof}

\begin{corollary}
Assume the input sequence $x$ is given as a stream, then no matter what is the order of the stream, any 1-pass deterministic algorithm computing $\lis(x)$ takes $\Omega(n)$ space. 
\end{corollary}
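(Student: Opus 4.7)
The plan is to derive the corollary as an essentially immediate consequence of \cref{ROBPThm}. The key observation is that a $1$-pass deterministic streaming algorithm, regardless of the order of the stream, is a special case of a deterministic read-once branching program: the streaming model is strictly more restrictive than the read-once branching program model, because in the streaming model the order of queries is fixed in advance by the stream, while a read-once branching program may even choose its queries adaptively, subject only to the constraint that no position is queried twice.

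Concretely, given a $1$-pass deterministic streaming algorithm using space $s$ on an arbitrary streaming order $\pi$, I would build a levelled $|\Sigma|$-way read-once branching program as follows. Level $0$ has a single root corresponding to the algorithm's initial memory state. For $j = 0, 1, \dots, n-1$, each node at level $j$ is labelled by the index $\pi_{j+1} \in [n]$ (the next position to be read in the stream) and represents a reachable memory configuration of the streaming algorithm just before reading the $(j+1)$-st stream symbol; its outgoing edges, one per symbol in $\Sigma$, point to the nodes at level $j+1$ representing the updated configurations. The sinks at level $n$ are labelled by the algorithm's output for the corresponding terminal configuration. Since the algorithm uses $s$ bits of space, each level has at most $2^s$ nodes, so the branching program has at most $(n+1)\cdot 2^s$ nodes. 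The read-once property is automatic: a streaming pass visits each of the $n$ positions exactly once, so along any root-to-sink path every index in $[n]$ appears on at most one node label.

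Applying \cref{ROBPThm} to this branching program, which computes $\lis(x)$ exactly, gives $(n+1)\cdot 2^s \ge 2^{\Omega(n)}$, hence $s = \Omega(n)$, as required. This bound holds for every streaming order $\pi$, since the construction above works for an arbitrary permutation.

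There is no substantive obstacle in this proof: the only point worth checking is that the simulation preserves correctness and the read-once property under an arbitrary order, and both are immediate because a $1$-pass stream queries each input position exactly once in the order dictated by $\pi$. In particular, the corollary follows without any extra work beyond pointing out that \cref{StreamExtThm} is already stated in the paper as a consequence of \cref{ROBPThm}, and the present corollary is just a restatement of \cref{StreamExtThm}.
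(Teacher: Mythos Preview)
Your proposal is correct and takes essentially the same approach as the paper: the paper's own proof is a one-sentence observation that the streaming model is a restricted version of the query-once (read-once branching program) model, so the space lower bound carries over. Your explicit construction of the levelled branching program from a space-$s$ streaming algorithm just spells out that observation in detail.
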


\begin{proof}
    The streaming model can be viewed as a restricted version of the query-once model where the algorithm can only access input sequence in a specific order. Thus, the space lower bound for query-once model also holds for the streaming model. 
\end{proof}

\section{Lower Bounds for Streaming LIS in Different Orders}\label{sec:stream}

In this section, we consider the problem of computing/approximating $\lis$ in the streaming model but with different orders. 

We restate the definition of the orders we studied here.

\TypeOneOrder*

\TypeTwoOrder*

\subsection{Lower Bounds for Type 1 Orders}

\subsubsection{Lower Bounds for Deterministic Algorithms}

\TypeOneOrderThm*

\begin{proof}

Without loss of generality, we assume $n$ is an even number. Consider the order $\pi = 1,3,\dots,n-1,2,4, \dots n $, i.e. in order $\pi$, we first see symbols in odd positions in natural order and then symbols in even positions in natural order. We first show an $\Omega(n)$ lower bound for this order to illustrate our idea. \cref{TypeOneOrderThm} then follows from the proof by a simple observation.

The idea is to build a large fooling set. Let $C\subseteq \{0,1\}^{n/4}$ be an asymptotically good error-correcting code with constant rate and distance $n/16$. We consider the following two party communication scenario: Alice holds a codeword $u \in C$ and Bob holds a codeword $v \in C$.



Consider the following construction. Alice first turns $u\in \{0,1\}^{n/4}$ into $u'\in \N^{n/2}$ by the following transformation. Let $u'$ be an empty sequence at first. Then for each $i\in [n/4]$, let $u_i\in {0,1}$ be the $i$-th symbol of $u$. If $u_i= 0$, we attach $0, 2i-1$ to the end of $u'$ and if $u_i= 1$, we attach $2i-1, 0$ to the end of $u'$. Bob does a similar transformation to get $v' \in \N^{n/2}$. The only difference is that if $v_i=0$, we attach $0,2i$ and if $v_i=1$, we attach $2i, 0$ to the end of $v'$. We note both $u'$ and $v'$ has exactly $n/4$ symbols that are 0 and $n/4$ that are nonzero. 

We can get a sequence $z \in \N^n$ such that 

$$z=u'_1, v'_1, u'_2, v'_2, \dots, u'_{n/2}, v'_{n/2}$$.

In other words, $z$ is equal to $u'$ if we only look at the odd positions of $z$ and is equal to $v'$ if we only look at the even positions. Notice that the sequence $z$ is different if Alice holds $v$ and Bob holds $u$. Thus, we use $z_{u,v}$ to denote the sequence we get when Alice hols $u$ and Bobs holds $v$ and  $z_{v,u}$ vice versa.

We now show that if $u=v$, $\lis( z_{u,v})\ge n/2$. Otherwise, 
$$\min\{\lis\big(z_{u,v}\big), \lis\big(z_{v,u} \big) \} \le \frac{15n}{32}  + 1 .$$ 

To see this, we can divide $z=z_{u,v}$ into  $n/4$ blocks each with 4 symbols. Let $z^i$ be the $i$-th block. By our construction of $z$, $z^i = u'_{2i-1},v'_{2i-1},u'_{2i}, v'_{2i}$. There are four cases:

\begin{itemize}
    \item Case 1: $u_i = v_i = 0$, $z^i = 0,0,2i-1, 2i$.
    \item Case 2: $u_i =1,  v_i = 0$, $z^i = 2i-1,0,0, 2i$.
    \item Case 3: $u_i=0,   v_i = 1$, $z^i = 0,2i,2i-1, 0$.
    \item Case 4: $u_i = v_i = 1$, $z^i = 2i-1, 2i,0,0$.
\end{itemize}

We note that in all cases except case 3,  the two non-zero symbols $2i-1, 2i$ form an increasing sequence. Thus if $u = v$, then each block $z^i$ is either in case 1 or case 4. Thus, $1, 2, 3, \dots, n/2-1, n/2$ is a subsequence of $z$ since $2i-1, 2i$ is a subsequence of block $z^i$. We have $\lis(z)\ge n/2$.

If $u \ne v$, since $u, v$ are both codewords from $C$, by our assumption, the Hamming distance between $u$ and $v$ is at least $n/16$.  Thus, there are at least $n/16$ positions $i\in [n/4]$ such that $z^i$ is in case 2 or case 3. Let $a$ be the number of blocks $z^i$ that are of case 2 and $b$ be the number of blocks $z^i$ that are in case 3. Thus $a+b > n/16$ and one of $a$ or $b$ is at least $n/32$. Without loss of generality, we can assume that $b \ge n/32$ since if not, we can look at the sequence $z_{v,u}$. This is because if $z^i_{u,v}$ is in case 2, then $z^i_{v,u}$ is in case 3 and vice versa. 

Notice that for any $1\le i< j\le n/4$, the nonzero symbols in $z^i$ is always strictly smaller than the nonzero symbols in $z^j$. The best strategy to pick a longest increasing subsequence is to pick the longest increasing subsequence with nonzero symbols in each block $z^i$ and then combine them (with an additional 0 symbol from the first block in some cases). In each block $z^i$, the length of longest increasing subsequence with nonzero symbols is 2 if the block is in case 1, 2, and 4. And it is 1 if the block is in cases 3. By our assumptions that the number of blocks in cases 3 is at least $n/32$. We know $\lis(z) \le \frac{15n}{32} + 1$. 

We can define a function $f:C\times C \rightarrow \{0,1\}$ such that 
 
\begin{equation*}
    f(u,v) = 
    \begin{cases}
    1, \text{ if } \lis(z_{u,v}) \ge n/2\\
    0, \text{ if } \lis(z_{u,v}) \le \frac{15n}{32}  + 1
    \end{cases}
\end{equation*}

We have shown that $C$ is a fooling set for function $f$. The deterministic communication  complexity of $f$ is $\log |C| = \Omega(n)$. Also notice that if we can get a $1+1/32$ approximation of $\lis$ in the order that first reveals odd symbols and then even symbols, we can use it as a protocol to compute $f$ exactly. Thus, any $R$ pass streaming algorithm for this order that gives a $1+1/32$ approximation must use a $\Omega(n/R)$ space. 

Finally, we generalize this result to any order of type 1 with parameter $m$. This is because we can use an error correcting code with codeword length $m/2$ as a fooling set. The above construction gives us a sequence $z\in \N^{2m}$. We can then obtain another sequence $z'$ such that when $z'$ is restricted to indices in $I\cup J$, it is equal to $z$ and all other symbols are fixed to 0.  In the two party communication scenario, Alice holds symbols with indices in $I$ and Bob holds symbols with indices in $J$. Their goal is to compute $\lis(z')$. By the same fooling set argument, we get the lower bound.

\end{proof}

We now show that, with high probability, deterministically approximating $\lis$ of a sequence given in a random streaming order requires $\Omega(n)$ space.

\RandomStrmOrder*

Before the proof of \cref{RandomStrmOrder}, we first show the following claim.

\begin{claim}
\label{clm:random_order_is_type1}
Assume $n$ is a multiple of $32$, and a streaming order is uniformly randomly sampled from all permutations of $[n]$, with probability at least $1-2^{-\Omega(n)}$, it is a type 1 order with parameter $n/32$.
\end{claim}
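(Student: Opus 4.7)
The plan is to reduce the claim to a concentration bound on a natural statistic of a uniformly random binary string. If $\pi$ is uniform on permutations of $[n]$, then the set $A = \{\pi_i : i \le n/2\}$ of original positions whose stream index lies in the first half is a uniformly random size-$n/2$ subset of $[n]$; let $B = [n] \setminus A$. To exhibit a type~1 witness with parameter $n/32$ it is enough to produce $n/32$ pairs $v_1 < w_1 < v_2 < w_2 < \cdots < v_{n/32} < w_{n/32}$ of original positions with $v_k \in A$ and $w_k \in B$, because then setting $i_k = \pi^{-1}(v_k) \le n/2$ and $j_k = \pi^{-1}(w_k) > n/2$ yields sets $I, J$ with $\max(I) < \min(J)$ and exactly the required interleaving of $\pi$-values.

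To produce such a witness I will partition $[n]$ into the $n/2$ consecutive pairs $P_i = \{2i-1, 2i\}$ and call $P_i$ \emph{good} if $2i-1 \in A$ and $2i \in B$. Since the pairs are disjoint and ordered, any collection of good pairs $P_{i_1}, \dots, P_{i_{n/32}}$ automatically yields an interleaved sequence by taking $v_k = 2i_k - 1$ and $w_k = 2i_k$. So it suffices to show that the number $Z$ of good pairs is at least $n/32$ with probability $1 - 2^{-\Omega(n)}$. A direct computation gives $\mathbb{E}[Z] = (n/2) \cdot \frac{(n/2)(n/2)}{n(n-1)} = \frac{n^2}{8(n-1)}$, so the target $n/32$ lies a constant factor below the mean.

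For the tail bound I plan to compare with the i.i.d.\ model in which each position is independently labeled $A$ or $B$ with probability $1/2$. There the analogous count $Z'$ is a sum of $n/2$ independent Bernoulli$(1/4)$ variables, so the multiplicative Chernoff bound gives $\Pr[Z' < n/32] \le 2^{-\Omega(n)}$. The uniform permutation distribution is precisely the i.i.d.\ distribution conditioned on having exactly $n/2$ labels of each type, an event of probability $\Theta(1/\sqrt{n})$, so the tail bound transfers at a cost of at most a $\sqrt{n}$ factor, which is absorbed into $2^{-\Omega(n)}$. Alternatively one could appeal to a McDiarmid-type inequality for uniform permutations, using that swapping two labels changes $Z$ by $O(1)$. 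No step is technically hard; the only mild subtlety is the concentration step under the dependent uniform-subset distribution, which the conditioning argument handles cleanly. Combining everything, with probability $1 - 2^{-\Omega(n)}$ there are at least $n/32$ good pairs, and the desired type~1 witness follows.
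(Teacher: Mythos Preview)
Your argument is correct and takes a genuinely different route from the paper. Both proofs begin identically, by observing that the set $A = \{\pi_i : i \le n/2\}$ of original positions streamed in the first half is a uniformly random $(n/2)$-subset of $[n]$, and that a type~1 witness amounts to finding many interleaved $A$--$B$ pairs among the original positions. From there the two diverge. The paper defines $g(A)$ to be the number of maximal intervals comprising $A$ and shows, by a direct combinatorial count, that at most a $2^{-\Omega(n)}$ fraction of all size-$(n/2)$ subsets have $g(A) \le n/32$: it bounds the number of such subsets by $\sum_{k \le n/32} \binom{n/2-1}{k-1}\binom{n/2}{k}$, estimates this sum, and compares against $\binom{n}{n/2}$. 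You instead partition $[n]$ into the $n/2$ disjoint consecutive pairs, count the ``good'' ones, and obtain exponential concentration by Chernoff in the i.i.d.\ labeling model followed by conditioning on the balanced event of probability $\Theta(1/\sqrt{n})$. Your route is shorter and avoids the binomial-coefficient manipulations entirely; the paper's route yields the slightly stronger intermediate statement that \emph{all} $A$-to-$B$ boundary positions (not just those occurring at the fixed pair boundaries $\{2i-1,2i\}$) are plentiful, though this extra strength is not used for the claim.

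One trivial point you gloss over: the definition requires $\pi_{j_m} < n$, so the pair $P_{n/2} = \{n-1, n\}$ cannot serve as a witness pair (since it would force $w_m = n$). Either exclude $P_{n/2}$ from the outset---working with $n/2 - 1$ disjoint pairs changes nothing asymptotically---or prove $Z \ge n/32 + 1$ and discard $P_{n/2}$ if it happens to be good.
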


\begin{proof}
    Let $\pi$ be a streaming order of $n$ elements, which is also a permutation of $[n]$. Denote the set of all permutations over $n$ by $S_n$. Let $A_{\pi}$ be the first $n/2$ elements revealed by the order $\pi$. 

    First, for a subset $A\subseteq [n]$, we consider $A$ can be divided into how many intervals of consecutive integers. In other words, each subset $A$ can be represented as the union of multiple intervals of consecutive integers. Let $g(A)$ be the minimal number of intervals that $A$ can be represented as, i.e.

    \[
    g(A) = \min \{k \ | \ A = \bigcup_{t = 1}^k [i_t,j_t) \text{ for some } i_1< j_1 < i_2 < j_2 <\cdots < i_{k} < j_{k} \}
    \]

    Here, $[i,j)$ is the set of all integers at least $i$ and smaller then $j$. For example, if $A = {1,2,\dots, n/2}$, then $g(A) = 1$. If $A$ is the subset of all even integers in $[n]$, then $g(A) = n/2$.

    For any permutation $\pi$, if $g(A_{\pi}) = m+1$ distinct gaps, then $\pi$ must a type 1 order with parameter $m$. This is because we can represent $A_{\pi}$ as $ \bigcup_{t = 1}^{m+1} [i_t,j_t)$ for some $i_1< j_1 < i_2 < j_2 <\cdots < i_{m+1} < j_{m+1} $. By the definition, we have $j_i \notin A_{\pi}$. We can pick $I = \{i_1, i_2, \dots, i_m\}$ and $J = \{j_1, j_2, \dots, j_m\}$ ($j_{m+1}$ may be equal to $n+1$). Since $I\subseteq A_{\pi}$ and $J\bigcap A_{\pi} = \emptyset$, Any elements in $I$ will be revealed before $J$ in the order of $\pi$. Thus, $\pi$ is a type 1 order with parameter $m$.

    Then, we show that if $A$ is uniformly sampled from the set of all subsets with size $n/2$, then $g(A) \ge n/2$ with probability at least $1-2{-\Omega(n)}$. 

    We compute how many subsets $A$ has $g(A) \le n/32$. Given an integer $k$, consider the how many set $A$ has $g(A) = k$. 
    $A$ can be divided into $k$ intervals. There are $\binom{n/2 - 1}{k-1}$ different ways to divide $n/2$ consecutive integers into $k$ consecutive intervals.

    For elements not in $A$, depending on whether $i_0 = 0$ or $j_k = n+1$, there are $k-1$, $k$ or $k+1$ consecutive intervals. For simplicity, we add 2 dummy elements $0$ and $n+1$, and consider how many ways we can divide $\{0, 1, \dots, n/2 + 1\}$ into $k+1$ consecutive intervals. Similarly to elements in $A$, there are $\binom{n/2 + 1}{k}$ choice of the size of intervals. 

    Thus, the number of subsets $A$ with $g(A) = k$ and $|A| = n/2$ is $\binom{n/2 - 1}{k-1} \cdot \binom{n/2 }{k}$. We can bound the number of sets $A$ with $g(A)\le n/32$ as following

    \begin{align}
    \label{eqn:A_size_bound1}
    \begin{split}
        \Big | \{A \subset [n] \ \big| \ |A| = n/2 \text{ and } g(A) \le n/32\}\Big| &  =\sum_{i = 1}^{n/32} \binom{n/2 - 1}{k-1}\cdot \binom{n/2}{k} \\
        & = \sum_{i = 1}^{n/32} \frac{(k-1)!(n/2-k)!}{(n/2-1)!}\cdot \frac{k!(n/2-k)!}{(n/2!)} \\
        & \le \frac{n}{2} \cdot \sum_{i = 1}^{n/32}\binom{n/2}{k} ^2 \\
        & \le \frac{n}{2} \cdot \Bigg(\sum_{i = 1}^{n/32}\binom{n/2}{k}\Bigg)^2. 
    \end{split}
    \end{align}

    Notice that 

    \begin{equation}
    \label{eqn:binom_bound}
    \sum_{i = 1}^{n/32}\binom{n/2}{k} 2^{-n/2} \le \binom{n/2}{n/32}2^{-15n/32}.
    \end{equation}

    This is because we can view $\sum_{i = 1}^{n/32}\binom{n/2}{k} 2^{-n/2}$ as the probability of doing $n/2$ independent coin flip and there are at most $n/32$ coin flips with head up. This happens if and only if there is a set of $(n/2 - n/32) = 15n/32$ coin flips ends with tail up. There are $\binom{n/2}{15n/32} = \binom{n/2}{n/32}$ choices if subsets with size $15n/32$, and for each, the probability of all coin flips is tail up is $2^{-15n/32}$. Taking a union bound, we get \cref{eqn:binom_bound}. Combining with \cref{eqn:A_size_bound1}, we have the following bound

    \begin{align*}
        \Big | \{A \subset [n] \big| |A| = n/2 \text{ and } g(A) \le n/32\}\Big| & \le \frac{n}{2} \cdot \Bigg(\sum_{i = 1}^{n/32}\binom{n/2}{k}\Bigg)^2 \\
        & \le \frac{n}{2} \cdot \Bigg(\binom{n/2}{n/32} \cdot 2^{n/32}\Bigg)^2 \\
        & \le \frac{n}{2} \cdot \Big(\big(16e\big)^{n/32} \cdot 2^{n/32} \Big)^2  \\
        & \le \frac{n}{2} \cdot \big(32e\big)^{n/16} \\
        & \le \frac{n}{2} \cdot 2^{7n/16} 
    \end{align*}

    There are $\binom{n}{n/2}$ subsets of $[n]$ with size exactly $n/2$. When $n$ is a large enough integer, if we sample a subset $A$ from all subsets of $[n]$ with size $n/2$, we have

    \begin{align*}
        \Pr[g(A) \le n/32]  & = \frac{\Big | \{A \subset [n] \big| |A| = n/2 \text{ and } g(A) \le n/32\}\Big| }{ \binom{n}{n/2} } \\
        & \le \frac{\frac{n}{2} \cdot 2^{7n/16}}{2^{n/2} }\\
        & \le 2^{-n/32}.
    \end{align*}

    Finally, we notice that if $\pi$ is uniformly sampled from $S_n$, then $A_{\pi}$ is uniformly distributed over all subset of $[n]$ with size $n/2$. It is because for any two subsets $A, B\subset [n]$, both with size $n/2$, we have

    \[
    \big | \{\pi\in S_n | A_{\pi} = A\} \big |  = \big | \{\pi\in S_n | B_{\pi} = B\} \big | =\bigg(\big (\frac{n}{2}\big )!\bigg )^2.
    \]
    
    Thus, with probability at lease $1-2^{-n/32}$, the permutation $\pi$ has $g(A_\pi) > n/32$ and is a type 1 order with parameter $n/32$.
    
\end{proof}

\begin{proof}[Proof of \cref{RandomStrmOrder}]

Since with probability at least $1-2^{-\Omega(n)}$, a random permutation (or streaming order) is a type 1 order with parameter $n/32$. By \cref{TypeOneOrderThm},  with probability at least $1-2^{-\Omega(n)}$, a $1+1/32$ deterministic approximation of $\lis(x)$ needs $\Omega(n)$ space.
    
\end{proof}

\subsubsection{Lower Bounds for Randomized Algorithms}

For randomized algorithms, we show an $\Omega(m)$ lower bound for exact computation when the input is given in \textbf{type 1} order with parameter $m$. We prove the following.

\TypeOneOrderThmRand*

\begin{proof}

The idea is to reduce set-disjointness problem to computing $\lis$ exactly.

Consider a \textbf{type 1} order with parameter $m$ and let $I$ and $J$ be the subsets in the definition. We only focus on the subsequence $\tx$ limited to the indices in $I\cup J$, denoted by $\tx = \tx_1 \tx_2 \cdots \tx_{2m}$. The streaming order, when restricted to $\tx$, is $\tx_1,\tx_3,\cdots,\tx_{2m-1},\tx_2,\tx_4,\cdots,\tx_{2m}$. In the 2 party communication model, Alice holds $\tx_1,\tx_3,\cdots,\tx_{2m-1}$ and Bob holds $\tx_2,\tx_4,\cdots,\tx_{2m}$.

Consider an instance $A,B$ of $\disj_{2m,k}$, where $A$ and $B$ are both $k$-subset of $[2m]$. For simplicity, represent $A$ and $B$ as characteristic vectors in $\{0,1\}^m$, i.e. $A$ and $B$ are both 0,1 vector with exactly $k$ 1s.

We construct the sequence $\tx$ such that $\tx_1,\tx_3,\cdots,\tx_{2m-1}$ is determined by $A$ and $\tx_2,\tx_4,\cdots,\tx_{2m}$ is determined by $B$. The construction is straightforward. For every $i\in[m]$, let $\tx_{2i-1} = A_i\cdot 2i$ and $\tx_{2i} = B_i\cdot (2i-1)$.

For simplicity, we attach a dummy $0$ character in front of $\tx$. This won't affect our asymptotic lower bound.  

If $\disj_{2m,k}(A,B) = 1$, for any $i\in[m]$, $A_i$ and $B_i$ must not both be 1. There are $2k$ positions that one of $\tx_{2i-1}$ and $\tx_{2i}$ is non-zero. Since these non-zero positions are increasing, $\lis(\tx) = 2k + 1$ (plus the 0 at the first position).

If $\disj_{2m,k}(A,B) = 0$, there exists $i\in [m]$ such that $A_i = B_i = 1$. We have $\tx_{2i-1} = 2i$ and $\tx{2i} = 2i-1$. Since there are only $2k$ non-zero positions, we have $\lis(\tx) \le 2k$ (plus the 0 at the first position).

Thus, if Alice and Bob can compute $\lis(\tx)$, they can also determine $\disj_{2m,k}(A,B)$. The communication complexity lower bound given in \cref{thm:cc_lb_set_disj} yields our space lower bound.

\end{proof}

\subsection{Lower Bounds for Type 2 Orders}
\label{sec:lb_type_2}

Our communication complexity based argument relies on the analysis of a carefully constructed matrix. In the following, we represent any position in a matrix with a pair of indices $(i,j)$ where $i$ is the row number and $j$ is the column number. For any two distinct positions $(i_1, j_1)$ and $(i_2, j_2)$, we say $(i_1, j_1)< (i_2, j_2)$ if $i_1 < i_2$ \textbf{and} $j_1 < j_2$.

We first show the following lemma, which is a result of Erd\"{o}s-Szekeres theorem.  

\begin{lemma}
\label{lem:erdos-szekeres}
Given a matrix $M\in \{0,1\}^{s\times r}$ with $s$ rows and $r$ columns, if each column of $M$ has exactly $\frac{s}{4}$ 1's, then there exists $t = \lfloor \frac{r\cdot s}{8(r+s)} \rfloor$ positions (denoted by$(i_1, j_1), (i_2, j_2), \dots, (i_t, j_t)$) in the matrix with value 1, such that  $(i_1, j_1) < (i_2, j_2)< \cdots < (i_t, j_t)$.

\end{lemma}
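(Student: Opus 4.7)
My plan is to reformulate the problem as a chain-antichain question on a partial order and apply Mirsky's theorem. I would let $P \subseteq [s] \times [r]$ denote the $1$-positions of $M$, ordered by $(i_1,j_1) \prec (i_2,j_2)$ iff $i_1 < i_2$ and $j_1 < j_2$; the quantity the lemma asks me to lower-bound is precisely the length $L$ of the longest chain in $(P,\prec)$. Since each column contributes $s/4$ ones, $|P| = rs/4$. Mirsky's theorem says $P$ can be partitioned into $L$ antichains, so $L \ge |P|/A$ where $A$ is the maximum antichain size. Thus the task reduces to showing $A \lesssim r+s$, which together with $|P| = rs/4$ yields the desired $L \gtrsim rs/(r+s)$.

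To bound $A$, I would fix an antichain $S \subseteq P$ and group its elements by column. Let $c_1 < c_2 < \cdots < c_m$ be the columns appearing in $S$ and $R_l \subseteq [s]$ the set of rows of $S$ in column $c_l$. Incomparability in $\prec$ forces $\min R_l \ge \max R_{l+1}$ for every $l$, so the $R_l$'s stack in a staircase descending from high rows in early columns to low rows in later columns. Writing $a_l = \max R_l$, $b_l = \min R_l$ and using $R_l \subseteq [b_l, a_l]$, a telescoping computation of $\sum_{l=1}^m (a_l - b_l)$ using the stacking constraint $b_l \ge a_{l+1}$ collapses the sum to at most $a_1 - b_m \le s-1$. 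Combined with $|R_l| \le a_l - b_l + 1$ and $m \le r$, this gives
$$|S| = \sum_{l=1}^m |R_l| \le m + (s-1) \le r + s - 1.$$

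Putting these together, $L \ge \frac{rs/4}{r+s-1} \ge \frac{rs}{4(r+s)} \ge \lfloor \frac{rs}{8(r+s)} \rfloor = t$, and since $L$ is an integer the floor is harmless. The main obstacle I anticipate is precisely the antichain bound: a crude bound on $|S|$ can be as bad as $rs/4$, and only by exploiting the staircase geometry across columns does one get the linear bound $r + s - 1$ that matches the $r+s$ in the denominator of $t$. The factor-of-$2$ slack between my bound $rs/(4(r+s))$ and the stated $rs/(8(r+s))$ is convenient: it comfortably absorbs the floor and any constants one might lose along the way, so I do not need to be careful about exact constants in the antichain telescoping.
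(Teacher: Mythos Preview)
Your proof is correct, and in fact tighter than what the paper needs. The route, however, is different from the paper's.

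The paper linearizes the $1$-positions into a sequence $\sigma$ of length $rs/4$ by mapping $(i,j)\mapsto r\cdot i - j + 1$, concatenating columns left to right with each column written in decreasing order. It then argues that any \emph{decreasing} subsequence of $\sigma$ has length at most $r+s$ (each step either advances to a later column or drops to a lower row), and invokes the Erd\H{o}s--Szekeres theorem to extract an increasing subsequence of length $t$, which it then translates back into a chain in the matrix. Your approach skips the linear encoding entirely: you work directly with the product partial order, apply Mirsky's theorem to get $L\ge |P|/A$, and bound the maximum antichain size $A\le r+s-1$ via the staircase/telescoping argument.

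The two arguments are morally the same computation seen through different lenses: a decreasing subsequence in the paper's $\sigma$ is precisely an antichain in your poset, and your telescoping bound $A\le r+s-1$ is exactly the paper's bound on the longest decreasing subsequence (indeed Erd\H{o}s--Szekeres is itself a corollary of Mirsky/Dilworth). What your version buys is directness --- no auxiliary encoding, no translation step at the end --- and a cleaner constant: you obtain $L\ge rs/(4(r+s-1))$, a factor of two better than the stated $t$, whereas the paper's formulation of Erd\H{o}s--Szekeres with $l=r+s+1$ spends that same factor of two. The paper's version has the minor expository advantage of name-dropping Erd\H{o}s--Szekeres, which matches the title of the lemma and the theme of the section.
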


\begin{proof}
    Consider the function $f:[s]\times[r]\rightarrow[rs]$ such that $f(i,j) = r\cdot i - j +1$. $f$ transforms any position $(i,j)$ into an integer in $[rs]$ and the integers corresponding to different positions are distinct. 
    
    For each column, there are exactly $s/4$ positions that have value 1. We transform them into integers in $[rs]$ with function $f$ and arrange them in a decreasing order to get a sequence of integers. We denote the sequence corresponding to $j$-th column by $\sigma^{(j)}$. For example, assume $(i_1, 1), \dots, (i_{s/4}, 1)$ are the positions with value 1 in the first column with $i_1  < \cdots < i_{s/4}$, then $\sigma^{(1)} = r\cdot i_{s/4}  ,  \dots, r\cdot i_1$. 
    
    Let $\sigma = \sigma^{(1)}\circ \cdots \circ \sigma^{(r)}$. So $\sigma$ is the concatenation from $\sigma^{(1)}$ to $\sigma^{(r)}$. We show that there is an increasing subsequence of $\sigma$ with length $t = \lfloor \frac{r\cdot s}{8(r+s)} \rfloor$.
    
    To see this, we first show that the longest decreasing subsequence of $\sigma $ has length at most $r+s$.  Consider any decreasing subsequence of $\sigma$ and the positions corresponding to this decreasing subsequence in the matrix. Assume the first element of this decreasing subsequence is in the range $[r\cdot i+1, r\cdot (i+1)]$ for some integer $i$, for the next element, there are two cases. First, the next element is in the range $[r\cdot i+1, r\cdot (i+1)]$, then it means the element is in the same row but in a column to the right of the first element. Second, the next element is in the same column as the first element, then it must be strictly smaller than $r\cdot i + 1$, which means it is in a row below the first element. This is true for any two consecutive elements in the decreasing subsequence. Notice that $\sigma$ is concatenated by columns from left to right and there are $r$ columns and $s$ rows, the decreasing subsequence has length at most $r+s$. 
    
    The next step is followed directly from Erd\"{o}s-Szekeres theorem (\cref{thm:erdos-szekeres}). $\sigma$ has exactly $r\cdot s/4$ elements. Let $t = \lfloor \frac{r\cdot s}{8(r+s)} \rfloor$ and $l = r+s+1$. Then 
    
    \[t\cdot l -t-l +1  = \lfloor \frac{r\cdot s}{8(r+s)} \rfloor\cdot (r+s+1) -t-r-s \le \frac{r\cdot s}{4}. \]
    
    Since there are no decreasing subsequence in $\sigma$ with length $l = r+s+1$, by Erd\"{o}s-Szekeres theorem, there must exists an increasing subsequence with length $t$. 
    
    Given an increasing subsequence of length $t$, consider the positions of any two consecutive elements in this increasing subsequence, say $(i_1, j_1)$ and $(i_2, j_2)$. By the construction of $\sigma$, we know $j_1 \le j_2$. If $j_1 = j_2$, then by $f(i_1, j_2) < f(i_2, j_2)$,  we must have $i_1 < i_2$. But this is impossible since elements in the same column are arranged in decreasing order. Thus, the only possible case is $(i_1, j_1) < (i_2, j_2)$. The increasing subsequence in $\sigma$ corresponding to a sequence of increasing positions $(i_1, j_1) < (i_2, j_2)< \cdots < (i_t, j_t)$.

\end{proof}

We restate the main result here.

\TypeTwoOrderThm*

In the following, we sometimes view sequences as vectors. For example, a sequence $\{0,1\}^n$ can also be viewed as a vector of dimension $n$.

\begin{proof}
	We first assume $r\cdot s = n$. Our argument can be extended to the case where $r\cdot s < n$. This is because our lower bound is only related to $r$ and $s$.Given a lower bound for sequence of length $r\cdot s$,  we can put dummy elements in the last $ n- r\cdot s$ positions and our lower bound also holds for the sequence of length $n$. 
 
    Without loss of generality, we assume $s$ can be divided by $36$ and $r$ is an even number. This is because our lower bound is asymptotic, if $s$ can not be divided by $36$, we can consider the largest multiple of 36 smaller than s and the similar for $r$. 
	
	In the following, let $p = s/36$ and $q = r/2$. 
	
	Our proof needs to use two asymptotically good error-correcting codes, $\Cone$ and $\Ctwo$. Here, $\Cone\subseteq \{0,1\}^p$ is over binary alphabet. It has codeword length $p$ and  distance $p/4$.  $\Cone$ can be viewed as a subset of $\{0,1\}^p$. $\Ctwo\subseteq (\Cone)^q$ uses $\Cone$ as its alphabet set. It has codeword length $q$ and distance $q/2$. Since both $\Cone$ and $\Ctwo$ are asymptotically good, $|\Cone| = 2^{\Omega(s)}$ and $|\Ctwo| = 2^{\Omega(r\cdot s)}$.
	
	Assume there are two parties, Alice and Bob, each holds a codeword of $\Ctwo$, say Alice holds $u$ and Bob holds $v$. We now show how to use $\Ctwo$ as a fooling set to obtain our lower bound. The high level idea is, we build a sequence $\sigma(u,v) \in\Z^n$  depending on $u$ and $v$, such that $\sigma$ can be divided evenly into $r$ blocks each of length $s$. Those odd blocks are only depending on $u$ (thus only known to Alice) and those even blocks are only depending on $v$ (thus only known to Bob).
 
	
	Our construction guarantees that, if $u=v$, then $\lis(\sigma) = s/9 + 3r + \min(p,q)$, and if $u \neq v$, $\lis(\sigma) = s/9 + 3r + \min(p,q) + \frac{pq}{8(p+q)}$. We first give the construction of $\sigma$.
	

	\paragraph{Construction of $\sigma$.} We first build a 01 matrix $M\in \{0,1\}^{\frac{s}{4} \times 4r}$ using $u$ and $v$. $M$ has $s/4$ rows and $4r$ columns.  We can group the elements of the matrix into $r$ blocks such that, for $1\le i \le r$, the $i$-th block are elements from the 4 columns with number $4i-3, 4i-2, 4i-1, 4i$. In the matrix $M$, the odd blocks are only determined by $u$ (the codeword known to Alice) and the even blocks are only determined by $v$ (the codeword known to Bob). A pictorial representation is given in \cref{fig:M}.
	
	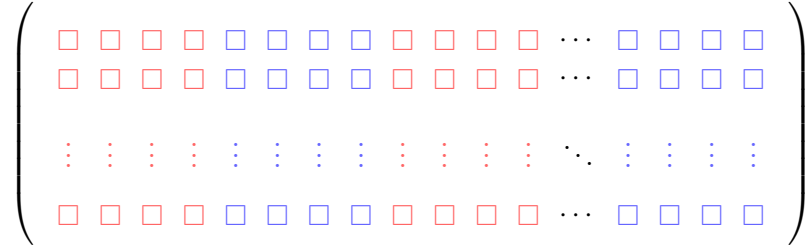
\begin{figure}[h]
	\begin{center}
		\begin{tikzpicture}
			
				\tikzstyle{column 1} = [red!60]
				\tikzstyle{column 2} = [red!60]
				\tikzstyle{column 3} = [red!60]
				\tikzstyle{column 4} = [red!60]
				\tikzstyle{column 5} = [blue!60]
				\tikzstyle{column 6} = [blue!60]
				\tikzstyle{column 7} = [blue!60]
				\tikzstyle{column 8} = [blue!60]
				\tikzstyle{column 9} = [red!60]
				\tikzstyle{column 10} = [red!60]
				\tikzstyle{column 11} = [red!60]
				\tikzstyle{column 12} = [red!60]
				
				\tikzstyle{column 14} = [blue!60]
				\tikzstyle{column 15} = [blue!60]
				\tikzstyle{column 16} = [blue!60]
				\tikzstyle{column 17} = [blue!60]

			     \matrix (M)[ left delimiter = (, right delimiter = )]
				{
					\node{$\square$}; &\node{$\square$}; &\node{$\square$}; &\node{$\square$}; &\node{$\square$}; &\node{$\square$}; &\node{$\square$}; &\node{$\square$}; &\node{$\square$}; &\node{$\square$}; &\node{$\square$}; &\node{$\square$}; &\node{$\cdots$};&\node{$\square$};&\node{$\square$};&\node{$\square$};&\node{$\square$};\\
					\node{$\square$}; &\node{$\square$}; &\node{$\square$}; &\node{$\square$}; &\node{$\square$}; &\node{$\square$}; &\node{$\square$}; &\node{$\square$}; &\node{$\square$}; &\node{$\square$}; &\node{$\square$}; &\node{$\square$}; &\node{$\cdots$};&\node{$\square$};&\node{$\square$};&\node{$\square$};&\node{$\square$};\\
					\node{}; \\
					\node {\vdots}; & \node {\vdots};& \node {\vdots};& \node {\vdots};& \node {\vdots};& \node {\vdots};& \node {\vdots};& \node {\vdots};&\node {\vdots}; & \node {\vdots};& \node {\vdots};& \node {\vdots};& \node {$\ddots$};& \node {\vdots}; & \node {\vdots};& \node {\vdots};& \node {\vdots};\\
					\node{}; \\
					\node{$\square$}; &\node{$\square$}; &\node{$\square$}; &\node{$\square$}; &\node{$\square$}; &\node{$\square$}; &\node{$\square$}; &\node{$\square$}; &\node{$\square$}; &\node{$\square$}; &\node{$\square$}; &\node{$\square$}; &\node{$\cdots$};&\node{$\square$};&\node{$\square$};&\node{$\square$};&\node{$\square$};\\
				};
		\end{tikzpicture}
	\end{center}
	\caption{pictorial representation of $M$. Each small square represents an element of $M$. We partition elements of $M$ into $r$ blocks where each block contains 4 consecutive columns. Odd blocks (elements in red) are determined by Alice's codeword $u$ and even blocks (elements in blue) are determined by Bob's codeword, $v$.  }
 \label{fig:M}
\end{figure} 
	
	We turn this 01 matrix $M$ into a sequence of integers $\sigma(M)$ as following. We first build a new matrix $M'\in \Z^{\frac{s}{4}\times 4r}$, such that, for each position $(i,j)\in [\frac{s}{4}]\times [4r]$, we set 

    \begin{equation}
    \label{eqn:M_value}
        M'_{i,j} = \begin{cases}
            0, & \text{if } M_{i,j} =  0, \\
            4r\cdot (i-1) + j, & \text{if } M_{i,j} = 1.
        \end{cases}
    \end{equation}
 
    Then, $\sigma$ is the concatenation of elements in $M'$ column by column. Specifically, let $\sigma^{(j)} $ be the concatenation of all symbols in the $j$-th column of $M'$, i.e. $\sigma^{(j)} = M'_{1, j}\circ M'_{2, j}\circ \cdots \circ M'_{s/4, j}$ . The final sequence $\sigma$ we get is
 
	$$\sigma(M) = \sigma^{(1)}\circ \cdots \circ \sigma^{(r)}.$$

	\paragraph{View matrix $M$ as a grid graph.} We can view the matrix $M$ as a \textbf{directed} grid graph, such that for each position $(i,j)$, there are two out going edges from this it connecting to $(i+1,j)$ and $(i,j+1)$ respectively (assuming they exist). A \textbf{path} in the matrix $M$ is a path in the corresponding graph that goes from $(1,1)$ to $(\frac{s}{4}, 4r)$ (from left-top to right bottom). We define the \textbf{weight} of a path is the number of $1$ node (position with value $1$) the path covers. We say a node is \textbf{covered} by a path if the path goes through that node.

	\begin{claim}
    \label{clm:weight_lis_eq}
	$\lis(\sigma(M))$ is equal to the largest weight of any path in $M$. 
	\end{claim}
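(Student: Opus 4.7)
The plan is to establish a tight correspondence between monotone paths in the grid graph on $M$ and strictly increasing subsequences of $\sigma(M)$. Two structural facts drive the argument. First, the value $v(i,j) = 4r(i-1)+j$ assigned to a $1$-cell is exactly its row-major index, so $v$ is strictly monotone in the componentwise partial order on cells: whenever two cells satisfy $(i,j)\ne (i',j')$ with $i\le i'$ and $j\le j'$, a short case split on whether $i=i'$ or $i<i'$ gives $v(i,j)<v(i',j')$ (the $i<i'$ case uses $4r(i'-i)\ge 4r$ dominating $|j'-j|\le 4r-1$). Second, because $\sigma$ is obtained by reading $M'$ column by column from top to bottom, cell $(i,j)$ appears in $\sigma$ at index $(j-1)(s/4)+i$, which sorts cells by $(j,i)$ lexicographically.

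For the direction $\lis(\sigma(M))\ge$ max path weight, I would fix a maximum-weight monotone path $P$ from $(1,1)$ to $(s/4,4r)$, list its $1$-cells $(i_1,j_1),\ldots,(i_w,j_w)$ in the order visited, and observe that $P$ moves only right or down, so both $i_k$ and $j_k$ are non-decreasing in $k$ while consecutive cells are distinct. By the indexing formula, the cells appear in $\sigma$ in this same order, and by the monotonicity of $v$ their values are strictly increasing, yielding an increasing subsequence of length $w$.

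For the reverse direction, I would start from any strictly increasing subsequence of $\sigma(M)$. At most one of its entries can be $0$ (strict increase forbids repetition), and if present it must be the very first entry. Let $(i_1,j_1),\ldots,(i_\ell,j_\ell)$ be the cells of the remaining nonzero entries, listed in $\sigma$-order. The $\sigma$-order forces $(j_k,i_k)<_{\mathrm{lex}}(j_{k+1},i_{k+1})$; the strict increase of values rules out the case $j_k<j_{k+1}$ combined with $i_k>i_{k+1}$, since then $v(i_{k+1},j_{k+1})-v(i_k,j_k)=4r(i_{k+1}-i_k)+(j_{k+1}-j_k)\le -4r+(4r-1)<0$. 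Together these force $i_k\le i_{k+1}$ and $j_k\le j_{k+1}$ for all $k$, so the $1$-cells form a componentwise monotone chain. Interpolating monotonically from $(1,1)$ to $(i_1,j_1)$, between consecutive chain cells, and from $(i_\ell,j_\ell)$ to $(s/4,4r)$ then yields a path of weight at least $\ell$.

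The main subtlety I expect is the bookkeeping around the potential leading $0$: naively the above gives $\lis(\sigma(M))\le$ max path weight $+1$. I plan to close this gap by exploiting the specific shape of the construction (for instance, arranging that the corner $(1,1)$ is a $0$-cell, so that prepending a $0$ to the increasing subsequence corresponds to the path simply entering through that $0$-cell at no extra weight cost) or by adopting a matching convention for path endpoints. This is the one step that requires care; the rest is a clean dictionary between the two combinatorial objects, and any residual $O(1)$ additive slack is in any case absorbed by the $+\min(p,q)$ and $+3r$ terms already present in the target $\lis(\sigma)$ formulas used by the outer reduction.
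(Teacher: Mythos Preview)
Your approach is essentially the same as the paper's: both directions rest on the two structural facts you isolate, namely that the row-major value $v(i,j)=4r(i-1)+j$ is strictly monotone in the componentwise order on cells, and that $\sigma$ lists cells in column-major $(j,i)$-lexicographic order. The paper's proof is in fact less careful than yours on the zero issue: it writes the value of every entry in the increasing subsequence as $4r(i-1)+j$, implicitly assuming all entries are nonzero, and never addresses the possibility of a leading $0$. So the off-by-one gap you flag is present in the paper's own argument as well; it is simply glossed over there. Your observation that any residual $O(1)$ slack is absorbed downstream is correct and is effectively what the paper relies on, since the later claims already carry a $-3$ slack term and compare quantities that differ by $\Theta\bigl(pq/(p+q)\bigr)$.
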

	
	\begin{proof}
     First, for any path in the grid graph, non-zero nodes it covers are in increasing order. This is because any path can only go down or right. By our assignment of the values in $M'$ (\cref{eqn:M_value}), non-zero elements in any row or column are in strict increasing order. Also, since $\sigma$ is the concatenation of all columns of $M'$ from left to right. For any path, nodes it covers is a subsequence of $\sigma$. Thus, we can say for any path with weight $w$ in the grid graph, there is an increasing subsequence of $\sigma$ with length $w$. 
     
	 On the other hand, we show that for any increasing subsequence, there is a path covering all the corresponding nodes in the grid graph. To see this, we show that for any two consecutive nodes $(i,j)$, $(i', j')$ in the increasing subsequence, we have $i\le i' $ and $j \le j'$. $j\le j' $ is by definition since we concatenate columns from left to right. Since it is an increasing subseqeunce, we must have $4r\cdot (i-1) + j < 4r\cdot (i'-1) + j'$. By the fact that there are only $4r$ columns, we have $i\le i'$. Thus, $(i', j')$ must appear on the down-right side of $(i,j)$. We can find a path in the grid graph connecting them. This shows for any increasing subsequence of $\sigma$ with length $w$, there is a path in the grid graph with weight $w$. 
  
	\end{proof}
	
	\paragraph{Constructing $M$ with $u$ and $v$.} In the following, we show how to build the matrix $M$ with $u,v$ (codewords hold by Alice and Bob) and then prove Claim~\ref{clm:lis_upper_bound} and Claim~\ref{clm:lis_lower_bound}, which yields \cref{thm:lis_type2}.
	
	We start with Alice's part. Alice holds $u\in \Ctwo$, or it can be viewed as $q$ codewords of $\Cone$. We denote them by $u^{(1)}, u^{(2)}, \dots, u^{(q)}$. Alice has control over all $q = r/2$ odd blocks. In our construction, the $(2i-1)$-th block (or the $i$-th block hold by Alice) is determined by $u^{(i)}$. The construction is the same for all $i$. 

	We take $2i-1$-th block for an example. There are $4$ columns in this block, i.e. columns with number $8i-7$, $8i-6$, $8i-5$ and $8i-4$. Given $u^{(i)}\in \Cone \subseteq \{0,1\}^{p}$, we turn it into $\bar{u}^{(i)}\in \{0,1\}^{9p}$  such that each $1$ in $u^{(i)}$ is replaced by $(1,1,0,0,0,0,1,1,0)$ and each $0$ is replaced by $(0,0,1,1,1,1,0,0,0)$. This is the $(8i-4)$-th column of $M$. For the other 3 columns, they all have $1$ at rows $9j$ for $j \ge 1$ and $0$ everywhere else.  
	 
	For Bob, the construction is similar. We take $2i$-th block for an example. Given each $v^{(i)}\in \Cone \subseteq \{0,1\}^{p}$, we turn it into $\bar{v}^{(i)}\subseteq \{0,1\}^{9p}$ such that each $1$ in $v^{(i)}$ is replaced by $(1,1,0,0,0,0,1,1,0)$ and each $0$ is replaced by $(0,0,1,1,1,1,0,0,0)$, which is the same as Alice's transform. Then we use $\bar{v}^{(i)}$ as the first column of the $2i$-th block (Note: for Alice's construction, we use it as the last column of the corresponding block). For the other 3 columns, they all have $1$ at rows $9j$ for $j\ge 1$ and $0$ everywhere else.  
	
	Let us see an example. Say $p = q = 2$,  $u^{(1)} = (0,1)$, $u^{(2)} = (1,1)$, $v^{(1)} = (0,0)$ and $v^{(2)} = (1,0)$, then the matrix $M$ is given in \cref{fig:M_example}. Notice that $M$ is a matrix with $s/4 = 9p = 18$ rows and $4r = 8q = 16$ columns. In \cref{fig:M_example}, only elements in red are determined by $u$ and $v$. The rest of the matrix is fixed for any $u$ and $v$. In other words, $u$ and $v$ determined the content of $p \cdot q$ sub-matrices in $M$, each of size $8\times 2$.

    \begin{figure}[ht]
	\begin{center}
		\begin{tikzpicture}
			\scriptsize
			\matrix (m)[
			matrix of math nodes,
			nodes in empty cells,
			left delimiter=(,
			right delimiter=),
			] {
				0&&  0&&  0&&  \color{red}0&&  \color{red}0&&  0&&  0&&  0&&  0&&  0&&  0&&  \color{red}1&&  \color{red}1&&  0&&  0 &&0  \\
	0&&  0&&  0&&  \color{red}0&&  \color{red}0&&  0&&  0&&  0&&  0&&  0&&  0&&  \color{red}1&&  \color{red}1&&  0&&  0 &&0  \\
	0&&  0&&  0&&  \color{red}1&&  \color{red}1&&  0&&  0&&  0&&  0&&  0&&  0&&  \color{red}0&&  \color{red}0&&  0&&  0 &&0  \\
	0&&  0&&  0&&  \color{red}1&&  \color{red}1&&  0&&  0&&  0&&  0&&  0&&  0&&  \color{red}0&&  \color{red}0&&  0&&  0 &&0  \\
	0&&  0&&  0&&  \color{red}1&&  \color{red}1&&  0&&  0&&  0&&  0&&  0&&  0&&  \color{red}0&&  \color{red}0&&  0&&  0 &&0  \\
	0&&  0&&  0&&  \color{red}1&&  \color{red}1&&  0&&  0&&  0&&  0&&  0&&  0&&  \color{red}0&&  \color{red}0&&  0&&  0 &&0  \\
	0&&  0&&  0&&  \color{red}0&&  \color{red}0&&  0&&  0&&  0&&  0&&  0&&  0&&  \color{red}1&&  \color{red}1&&  0&&  0 &&0  \\
	0&&  0&&  0&&  \color{red}0&&  \color{red}0&&  0&&  0&&  0&&  0&&  0&&  0&&  \color{red}1&&  \color{red}1&&  0&&  0 &&0  \\
	\color{blue}1&& \color{blue}1&& \color{blue}1&& \color{blue}0&& \color{blue}0&& \color{blue}1&& \color{blue}1&& \color{blue}1&&  \color{blue}1&& \color{blue}1&& \color{blue}1&& \color{blue}0&& \color{blue}0&& \color{blue}1&& \color{blue}1&& \color{blue}1  \\
	0&&  0&&  0&&  \color{red}1&&  \color{red}0&&  0&&  0&&  0&&  0&&  0&&  0&&  \color{red}1&&  \color{red}0&&  0&&  0 &&0  \\
	0&&  0&&  0&&  \color{red}1&&  \color{red}0&&  0&&  0&&  0&&  0&&  0&&  0&&  \color{red}1&&  \color{red}0&&  0&&  0 &&0  \\
	0&&  0&&  0&&  \color{red}0&&  \color{red}1&&  0&&  0&&  0&&  0&&  0&&  0&&  \color{red}0&&  \color{red}1&&  0&&  0 &&0  \\
	0&&  0&&  0&&  \color{red}0&&  \color{red}1&&  0&&  0&&  0&&  0&&  0&&  0&&  \color{red}0&&  \color{red}1&&  0&&  0 &&0  \\
	0&&  0&&  0&&  \color{red}0&&  \color{red}1&&  0&&  0&&  0&&  0&&  0&&  0&&  \color{red}0&&  \color{red}1&&  0&&  0 &&0  \\
	0&&  0&&  0&&  \color{red}0&&  \color{red}1&&  0&&  0&&  0&&  0&&  0&&  0&&  \color{red}0&&  \color{red}1&&  0&&  0 &&0  \\
	0&&  0&&  0&&  \color{red}1&&  \color{red}0&&  0&&  0&&  0&&  0&&  0&&  0&&  \color{red}1&&  \color{red}0&&  0&&  0 &&0  \\
	0&&  0&&  0&&  \color{red}1&&  \color{red}0&&  0&&  0&&  0&&  0&&  0&&  0&&  \color{red}1&&  \color{red}0&&  0&&  0 &&0  \\
	\color{blue}1&& \color{blue}1&& \color{blue}1&& \color{blue}0&& \color{blue}0&& \color{blue}1&& \color{blue}1&& \color{blue}1&&  \color{blue}1&& \color{blue}1&& \color{blue}1&& \color{blue}0&& \color{blue}0&& \color{blue}1&& \color{blue}1&& \color{blue}1 \\  
			} ;
			
		\draw[thick, green, opacity = 1, -](m-1-11.north west)-- (m-8-11.south west)-- (m-8-21.south east) -- (m-1-21.north east) -- (m-1-11.north west);
		\draw[thick, green, opacity = 1, -](m-10-11.north west)-- (m-17-11.south west)-- (m-17-21.south east) -- (m-10-21.north east) -- (m-10-11.north west);
		\draw[thick, green, opacity = 1, -](m-1-1.north west)-- (m-8-1.south west)-- (m-8-5.south east) -- (m-1-5.north east) -- (m-1-1.north west);
		\draw[thick, green, opacity = 1, -](m-10-1.north west)-- (m-17-1.south west)-- (m-17-5.south east) -- (m-10-5.north east) -- (m-10-1.north west);
		\draw[thick, green, opacity = 1, -](m-1-27.north west)-- (m-8-27.south west)-- (m-8-31.south east) -- (m-1-31.north east) -- (m-1-27.north west);
		\draw[thick, green, opacity = 1, -](m-10-27.north west)-- (m-17-27.south west)-- (m-17-31.south east) -- (m-10-31.north east) -- (m-10-27.north west);

		\draw[thick, red, opacity = 1, -](m-1-7.north west)-- (m-8-7.south west)-- (m-8-9.south east) -- (m-1-9.north east) -- (m-1-7.north west);
		\draw[thick, red, opacity = 1, -](m-10-7.north west)-- (m-17-7.south west)-- (m-17-9.south east) -- (m-10-9.north east) -- (m-10-7.north west);
		\draw[thick, red, opacity = 1, -](m-1-23.north west)-- (m-8-23.south west)-- (m-8-25.south east) -- (m-1-25.north east) -- (m-1-23.north west);
		\draw[thick, red, opacity = 1, -](m-10-23.north west)-- (m-17-23.south west)-- (m-17-25.south east) -- (m-10-25.north east) -- (m-10-23.north west);
		
		\end{tikzpicture}
	\end{center}
	\caption{matrix $M$ when $u^{(1)} = (0,1)$, $u^{(2)} = (1,1)$, $v^{(1)} = (0,0)$ and $v^{(2)} = (1,0)$.}
 \label{fig:M_example}
\end{figure}
	
	We look at the sub-matrices determined by $u$ and $v$. For each pair of indices $(i, j)\in [p] \times [q]$, the sub-matrix determined by the value of $u^{(j)}_i$ and $v^{(j)}_i$ has 4 cases: 
	\begin{enumerate}
	    \item $u^{(j)}_i = 1$ and $v^{(j)}_i = 0$.
	    \item $u^{(j)}_i = v^{(j)}_i = 1$.
	    \item $u^{(j)}_i = v^{(j)}_i = 0$.
	    \item $u^{(j)}_i = 0$ and $v^{(j)}_i = 1$.
	\end{enumerate}

    A pictorial depiction of these 4 cases (from left to right) is given in \cref{fig:M_detail}. As shown in \cref{fig:M_detail}, if $u^{(j)}_i = v^{(j)}_i$, any path in the sub-matrix has weight at most $5$ (marked in red). However, if $u^{(j)}_i \neq v^{(j)}_i$, there exists a path in the sub-matrix with weight 6 (marked in green). 

    \begin{figure}[ht]
	\begin{center}
		\begin{tikzpicture}
			\scriptsize
			\matrix (m)[
			matrix of math nodes,
			nodes in empty cells,
			left delimiter=(,
			right delimiter=)
			] {
				1 && 0  \\
				1 && 0  \\
				0 && 1  \\
				0 && 1  \\
				0 && 1  \\
				0 && 1  \\
				1 && 0  \\
				1 && 0  \\
			} ;
		
			\draw[thick, green, opacity = 0.5, ->](m-1-1.center)-- (m-2-1.center)-- (m-2-3.center) -- (m-8-3.center);

			\matrix (m1)[
			matrix of math nodes,
			nodes in empty cells,
			left delimiter=(,
			right delimiter=),
			right=of m
			] {
				1 && 1  \\
				1 && 1  \\
				0 && 0  \\
				0 && 0  \\
				0 && 0  \\
				0 && 0  \\
				1 && 1  \\
				1 && 1  \\
			} ;
		
			\draw[thick, red, opacity = 0.5, ->](m1-1-1.center)-- (m1-2-1.center)-- (m1-2-3.center) -- (m1-8-3.center);

			\matrix (m2)[
			matrix of math nodes,
			nodes in empty cells,
			left delimiter=(,
			right delimiter=),
			right=of m1
			] {
				0 && 0  \\
				0 && 0  \\
				1 && 1  \\
				1 && 1  \\
				1 && 1  \\
				1 && 1  \\
				0 && 0  \\
				0 && 0  \\
			} ;
	
			\draw[thick, red, opacity = 0.5, ->](m2-1-1.center)-- (m2-3-1.center)-- (m2-3-3.center) -- (m2-8-3.center);
			
			\matrix (m3)[
			matrix of math nodes,
			nodes in empty cells,
			left delimiter=(,
			right delimiter=),
			right=of m2
			] {
				0 && 1  \\
				0 && 1  \\
				1 && 0  \\
				1 && 0  \\
				1 && 0  \\
				1 && 0  \\
				0 && 1  \\
				0 && 1  \\
			} ;

			\draw[thick, green, opacity = 0.5, ->](m3-1-1.center)-- (m3-6-1.center)-- (m3-6-3.center) -- (m3-8-3.center);
		
		\end{tikzpicture}
	\end{center}
	\caption{4 cases of the submatrix determined by $u^{(j)}_i$ and $v^{(j)}_i$.}
 \label{fig:M_detail}
\end{figure}
	
	To simplify the argument a bit, we call the red sub-matrices in \cref{fig:M_example} \textit{key sub-matrices} since everything else are invariant and only these sub-matrices are determined by $u$ and $v$. There are $p\cdot q$ key sub-matrices. We label them by $(i,j)$ such that the key sub-matrix $(i,j)$ is determined by $u^{(j)}_i $ and $v^{(j)}_i$. Here, $1\le i \le p$ and $1\le j \le q$.
	
     
    A path in $M$ can go through those key sub-matrices to gain higher weight. It can also cover the 1's outside these key sub-matrices. An abstraction of $M$ is given in \cref{fig:M_abstraction}. In \cref{fig:M_abstraction}, key sub-matrices are represented as red rectangles and the blue arrow are the 1's that paths can take. They are all on rows with row-number that is a multiple of 9. 0's are not shown in the figure since paths going through these nodes won't gain any weight. We are only interested in paths with largest weight.
    
    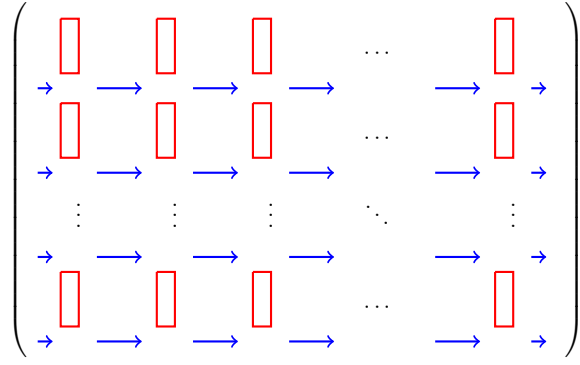
\begin{figure}[ht]
	\begin{center}
		\begin{tikzpicture}
			\scriptsize
			\matrix (m)[
			matrix of math nodes,
			nodes in empty cells,
			left delimiter=(,
			right delimiter=),
			] {
	&&  &&  &&  &&  &&  &&  &&  &&  &&  &&  &&  &&  &&  &&  &&   \\
	&&  &&  &&  &&  &&  &&  &&  &&  &&  &&  &&  &&  &&  &&  &&  \\
	&&  &&  &&  &&  &&  &&  &&  &&  &&  && \cdots &&  &&  &&  &&  &&  \\
	&&  &&  &&  &&  &&  &&  &&  &&  &&  &&  &&  &&  &&  &&  &&  \\
	&&  &&  &&  &&  &&  &&  &&  &&  &&  &&  &&  &&  &&  &&  &&  \\
	&&  &&  &&  &&  &&  &&  &&  &&  &&  &&  &&  &&  &&  &&  &&  \\
	&&  &&  &&  &&  &&  &&  &&  &&  &&  &&  &&  &&  &&  &&  &&  \\
	&&  &&  &&  &&  &&  &&  &&  &&  &&  && \cdots &&  &&  &&  &&  &&  \\
	&&  &&  &&  &&  &&  &&  &&  &&  &&  &&  &&  &&  &&  &&  &&  \\
	&&  &&  &&  &&  &&  &&  &&  &&  &&  &&  &&  &&  &&  &&  &&  \\
	&& \vdots &&  &&  && \vdots &&  &&  && \vdots &&  &&  && \ddots &&  &&  &&  &&  \vdots &&  \\
	&&  &&  &&  &&  &&  &&  &&  &&  &&  &&  &&  &&  &&  &&  &&  \\
	&&  &&  &&  &&  &&  &&  &&  &&  &&  &&  &&  &&  &&  &&  &&  \\
	&&  &&  &&  &&  &&  &&  &&  &&  &&  &&  &&  &&  &&  &&  &&  \\
	&&  &&  &&  &&  &&  &&  &&  &&  &&  &&  &&  &&  &&  &&  &&  \\
	&&  &&  &&  &&  &&  &&  &&  &&  &&  && \cdots &&  &&  &&  &&  &&  \\
	&&  &&  &&  &&  &&  &&  &&  &&  &&  &&  &&  &&  &&  &&  &&  \\
	&&  &&  &&  &&  &&  &&  &&  &&  &&  &&  &&  &&  &&  &&  && \\  
			} ;
			
		
		\draw[thick, blue, opacity = 1, ->](m-5-1.west)-- (m-5-2.west);
		
		\draw[thick, blue, opacity = 1, ->](m-10-1.west)-- (m-10-2.west);
		
		\draw[thick, blue, opacity = 1, ->](m-13-1.west)-- (m-13-2.west);
		
		\draw[thick, blue, opacity = 1, ->](m-18-1.west)-- (m-18-2.west);

		\draw[thick, red, opacity = 1, -](m-1-2.center)-- (m-4-2.center)-- (m-4-3.center) -- (m-1-3.center) -- (m-1-2.center);
		
		\draw[thick, red, opacity = 1, -](m-6-2.center)-- (m-9-2.center)-- (m-9-3.center) -- (m-6-3.center) -- (m-6-2.center);
		
		\draw[thick, red, opacity = 1, -](m-14-2.center)-- (m-17-2.center)-- (m-17-3.center) -- (m-14-3.center) -- (m-14-2.center);

		
		\draw[thick, blue, opacity = 1, ->](m-5-4.center)-- (m-5-7.center);
		
		\draw[thick, blue, opacity = 1, ->](m-10-4.center)-- (m-10-7.center);
		
		\draw[thick, blue, opacity = 1, ->](m-13-4.center)-- (m-13-7.center);
		
		\draw[thick, blue, opacity = 1, ->](m-18-4.center)-- (m-18-7.center);

		\draw[thick, red, opacity = 1, -](m-1-8.center)-- (m-4-8.center)-- (m-4-9.center) -- (m-1-9.center) -- (m-1-8.center);
		
		\draw[thick, red, opacity = 1, -](m-6-8.center)-- (m-9-8.center)-- (m-9-9.center) -- (m-6-9.center) -- (m-6-8.center);
		
		\draw[thick, red, opacity = 1, -](m-14-8.center)-- (m-17-8.center)-- (m-17-9.center) -- (m-14-9.center) -- (m-14-8.center);
		
	    
	    \draw[thick, blue, opacity = 1, ->](m-5-10.center)-- (m-5-13.center);
		
		\draw[thick, blue, opacity = 1, ->](m-10-10.center)-- (m-10-13.center);
		
		\draw[thick, blue, opacity = 1, ->](m-13-10.center)-- (m-13-13.center);
		
		\draw[thick, blue, opacity = 1, ->](m-18-10.center)-- (m-18-13.center);
		
		\draw[thick, red, opacity = 1, -](m-1-14.center)-- (m-4-14.center)-- (m-4-15.center) -- (m-1-15.center) -- (m-1-14.center);
		
		\draw[thick, red, opacity = 1, -](m-6-14.center)-- (m-9-14.center)-- (m-9-15.center) -- (m-6-15.center) -- (m-6-14.center);
		
		\draw[thick, red, opacity = 1, -](m-14-14.center)-- (m-17-14.center)-- (m-17-15.center) -- (m-14-15.center) -- (m-14-14.center);
		

		 \draw[thick, blue, opacity = 1, ->](m-5-16.center)-- (m-5-19.center);
		
		\draw[thick, blue, opacity = 1, ->](m-10-16.center)-- (m-10-19.center);
		
		\draw[thick, blue, opacity = 1, ->](m-13-16.center)-- (m-13-19.center);
		
		\draw[thick, blue, opacity = 1, ->](m-18-16.center)-- (m-18-19.center);
		
		
		\draw[thick, blue, opacity = 1, ->](m-5-24.center)-- (m-5-27.center);
		
		\draw[thick, blue, opacity = 1, ->](m-10-24.center)-- (m-10-27.center);
		
		\draw[thick, blue, opacity = 1, ->](m-13-24.center)-- (m-13-27.center);
		
		\draw[thick, blue, opacity = 1, ->](m-18-24.center)-- (m-18-27.center);
		
		\draw[thick, red, opacity = 1, -](m-1-28.center)-- (m-4-28.center)-- (m-4-29.center) -- (m-1-29.center) -- (m-1-28.center);
		
		\draw[thick, red, opacity = 1, -](m-6-28.center)-- (m-9-28.center)-- (m-9-29.center) -- (m-6-29.center) -- (m-6-28.center);
		
		\draw[thick, red, opacity = 1, -](m-14-28.center)-- (m-17-28.center)-- (m-17-29.center) -- (m-14-29.center) -- (m-14-28.center);
		
		\draw[thick, blue, opacity = 1, ->](m-5-30.center)-- (m-5-31.center);
		
		\draw[thick, blue, opacity = 1, ->](m-10-30.center)-- (m-10-31.center);
		
		\draw[thick, blue, opacity = 1, ->](m-13-30.center)-- (m-13-31.center);
		
		\draw[thick, blue, opacity = 1, ->](m-18-30.center)-- (m-18-31.center);

		\end{tikzpicture}
	\end{center}
	\caption{abstraction of $M$}
 \label{fig:M_abstraction}
\end{figure}
     
    We show that the largest weight of any path in $M$ can vary depending on whether $u=v$. We have the following two claims. 
	
	\begin{claim}
    \label{clm:lis_upper_bound}
	 If $u=v$, then $\lis(\sigma(M)) \le 4p + 6q + \min(p,q) $. 
	\end{claim}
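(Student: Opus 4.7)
By \cref{clm:weight_lis_eq}, $\lis(\sigma(M))$ equals the maximum size of a subset $S$ of the $1$-valued cells of $M$ that is monotone under the partial order $(r_1,c_1)\preceq(r_2,c_2) \Leftrightarrow r_1\le r_2 \text{ and } c_1\le c_2$. Call the $9$ rows $\{9(i-1)+1,\ldots,9i\}$ the $i$-th \emph{row-group}, call row $9i$ its \emph{horizontal row}, and the other eight its \emph{key rows}. My plan is to bound $|S|$ by decomposing it as $S=S^h\sqcup S^k$, with $S^h$ collecting cells in horizontal rows and $S^k$ cells in key rows, then bounding each piece while carefully accounting for their interaction.

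For $S^h$, its cells live at positions $(9i,c)$ for $i\in[p]$ and $c$ among the $6q$ non-key columns, forming a $p\times 6q$ grid; the standard monotone-subset bound gives $|S^h|\le p+6q-1$. For $S^k$, the critical input is the assumption $u=v$: both columns of every key sub-matrix carry the identical $8$-row pattern, either $(1,1,0,0,0,0,1,1)$ or $(0,0,1,1,1,1,0,0)$. A direct enumeration (mirroring \cref{fig:M_detail}) shows that any monotone subset of $1$-cells within a single sub-matrix has at most $5$ elements, and, since different key sub-matrices occupy disjoint row- and column-ranges, the sub-matrices actually visited by $S^k$ form a monotone chain in the $p\times q$ block grid, so at most $p+q-1$ sub-matrices are hit.

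A naive combination gives $|S^h|+|S^k|\le (p+6q-1)+5(p+q-1)$, which is too weak. The refinement exploits the monotonicity between $S^h$ and $S^k$: any cell of $S^h$ in row $9i$ must follow every cell of $S^k$ inside the $i$-th row-group and have column at least the last key column used there, so visiting extra sub-matrices inside row-group $i$ pushes into later columns and consumes horizontal cells in row $9i$ and in all subsequent horizontal rows. I plan to encode this as a row-group-by-row-group amortized charging (equivalently, a compact dynamic program over the $p\times q$ sub-matrix grid) that charges each visited key sub-matrix $4$ cells on average, reserves $6q$ for a single ``dominant'' horizontal row, and collects a $\min(p,q)$ boundary term. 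The hard part is the amortization itself: per-row-group bounds such as ``at most $2q+3$ key cells inside row-group $i$'' do not aggregate to the desired $4p$, so the argument has to couple row groups. The critical property of the two allowable patterns is that either of them forces the ``last row used'' in any monotone subset of size at least $4$ within a sub-matrix to be at least $6$ in the relative row index, which severely restricts both the next sub-matrix in the same row group and the row-$9i$ cells that remain reachable downstream.
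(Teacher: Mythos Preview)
Your proposal sets up a reasonable decomposition but stops exactly at the point where the work lies: you explicitly flag the amortization between $S^h$ and $S^k$ as ``the hard part'' and then do not carry it out. The per-sub-matrix bound of $5$ together with $|S^h|\le p+6q-1$ and ``at most $p+q-1$ sub-matrices are hit'' only gives $6p+11q$-type bounds, and the vague remark about the ``last row used being at least $6$'' is not turned into an inequality. As written this is a proof outline with the main lemma missing, not a proof.

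The paper avoids this amortization entirely by staying with the \emph{path} formulation of \cref{clm:weight_lis_eq} rather than your antichain/monotone-subset reformulation. A path is a single contiguous object going from the top-left to the bottom-right corner, so one can reason about what it ``spends'' globally: first, one argues (by a simple dominance observation) that an optimal path never strays outside the key sub-matrices and the blue horizontal rows; then the accounting is immediate. Traversing a key sub-matrix from its top-left to its bottom-right yields $5$ but consumes one of the $q$ key-column pairs \emph{and} one of the $p$ row-groups, so this can happen at most $\min(p,q)$ times; every other row-group contributes at most $4$ via a single key column; and the blue rows contribute at most $6q$ in total. A two-case analysis ($p\ge q$ versus $p<q$) then gives $5\min(p,q)+4(p-\min(p,q))+6q$ or $5p+6q$, both equal to $4p+6q+\min(p,q)$.

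The point is that the path picture packages your intended coupling between row-groups for free: the single path cannot reuse a key-column pair in a later row-group, which is precisely the cross-row-group constraint your amortization was trying to capture. If you want to salvage your approach, the cleanest fix is to translate your monotone chain back into a path and run the paper's two-case count; trying to amortize directly over $S^h\sqcup S^k$ is doable but strictly harder than necessary.
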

	
	\begin{proof}
	 By Claim~\ref{clm:weight_lis_eq}, only need to show that any path in $M$ has weight at most $4p + 6q + \min(p,q)$. 
	 
     Let $\path$ be one of the path in $M$ that has largest weight. Since $u=v$, for each key sub-matrix, if $\path$ goes from top-left to bottom-right of this key sub-matrix, it will gain a weight of $5$. If $\path$ only goes through one column of the key sub-matrix, it will gain a total weight of $4$. 
     
     We have the following simple observation. We can assume $\path$ does not go beyond key sub-matrices and blue arrows (except for the first few steps). This is because every time $\path$ does this, it can only increase the total weight by at most 1, but it will waste the weight it could gain by going through the blue arrow (which is 6) or the key sub-matrix (which is at least 4). 
     
	 If $p \ge q$, the best strategy is to gain a weight of 5 in as much key sub-matrices as possible. But we can only do this for $q$ times since every time it gain a total weight of 5 in one key sub-matrix, it has to go to a new column. The total weight it can gain in blue arrows is at most $6q$ since we assume it will not go to the zero region. We can gain an additional weight of $4(p-q)$ in key sub-matrices (in these sub-matrices, $\path$ only uses one column). Thus, the total weight in this case is at most $5q + 6q + 4(p-q) = 4p + 6q + q$. 
	 
	 If $p < q$, the best strategy is again to gain a weight of 5 in as much key sub-matrices as possible. We can do this for $p$ times. The total weight it can gain in blue arrows is at most $6q$. Thus, the total weight in this is at most $5p + 6q$.
	 
	 Combine these two cases, we conclude the largest weight of any path in $M$ is at most $4p + 6q + \min(p,q) $.
	 
	\end{proof}

	\begin{claim}
    \label{clm:lis_lower_bound}
	If $u\neq v$, then $\lis(\sigma(M)) \ge 4p + 6q -3 + \min(p,q) + \frac{pq}{16(p+q)}$.
	\end{claim}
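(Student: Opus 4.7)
The plan exploits the dichotomy illustrated in \cref{fig:M_detail}: every key sub-matrix $(i,j)$ with $u^{(j)}_i \neq v^{(j)}_i$ admits a monotone sub-path from its top-left to its bottom-right of weight $6$ (the green routes), whereas sub-matrices with $u^{(j)}_i = v^{(j)}_i$ cap at weight $5$. I will (i) use the distance properties of $\Cone$ and $\Ctwo$ to find a long monotone chain of such ``good'' positions via \cref{lem:erdos-szekeres}, and (ii) construct an explicit grid path that routes through these positions, gaining $+1$ weight at each chain position relative to a baseline $u = v$ path.

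For step (i), let $T = \{(i,j) \in [p]\times [q] : u^{(j)}_i \neq v^{(j)}_i\}$. Since $u \neq v$ and $\Ctwo$ has distance $q/2$, the set $J = \{j \in [q] : u^{(j)} \neq v^{(j)}\}$ satisfies $|J| \geq q/2$; for each $j \in J$, since $u^{(j)}, v^{(j)} \in \Cone$ are distinct and $\Cone$ has distance $p/4$, the set $I_j = \{i \in [p] : u^{(j)}_i \neq v^{(j)}_i\}$ satisfies $|I_j| \geq p/4$. Viewing $T \cap ([p] \times J)$ as a $p \times |J|$ Boolean matrix (trimming each column to exactly $p/4$ ones if needed), \cref{lem:erdos-szekeres} with $s' = p$ and $r' = |J|$ produces a strictly increasing chain
\[
(i_1, j_1) < (i_2, j_2) < \cdots < (i_{t'}, j_{t'}) \text{ in } T, \quad t' \geq \left\lfloor \frac{|J|\, p}{8(|J|+p)} \right\rfloor \geq \left\lfloor \frac{pq}{16(p+q)} \right\rfloor.
\]

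For step (ii), I build a monotone grid path $\path$ from $(1,1)$ to $(s/4, 4r)$ by adapting the optimum path structure identified in the proof of \cref{clm:lis_upper_bound}. Concretely, I first describe a ``reference'' path for the $u=v$ setting that makes $\min(p,q)$ weight-$5$ turns (one per column-pair when $p \geq q$, or one per super-row when $p < q$), gathers weight $4$ through the other visited key sub-matrices via one-column descents, and picks up the rest from blue-row traversals, totaling at least $4p + 6q - 3 + \min(p,q)$ (the $-3$ absorbing the non-key boundary slack at $(1,1)$ and $(s/4, 4r)$). I then modify the path so that $t'$ of its turns are placed at the chain positions $(i_1, j_1), \ldots, (i_{t'}, j_{t'})$; this reassignment is consistent with a monotone path precisely because the chain is coordinate-wise strictly increasing. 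At each chain turn, the weight-$5$ red sub-path inside the key sub-matrix is replaced by the weight-$6$ green sub-path from \cref{fig:M_detail}, yielding $+1$ per chain turn.

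Summing over all contributions, $\path$ has weight at least $4p + 6q - 3 + \min(p,q) + t'$, and by \cref{clm:weight_lis_eq} this bounds $\lis(\sigma(M))$ from below as claimed. The main obstacle is the weight bookkeeping in step (ii): one must verify that relocating the turns to the chain positions does not degrade the $\sim 6q$ blue-row contribution or the weight-$4$ descents through the remaining key sub-matrices by more than the $-3$ already absorbed in the baseline. This follows from the strict monotonicity of the chain, which allows consecutive chain turns to be ``stitched'' using the same descending-column and rightward blue-row transitions that appear in the reference $u = v$ path, so the only loss is the constant boundary slack.
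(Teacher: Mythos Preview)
Your proposal is correct and follows essentially the same argument as the paper: both use the distance of $\Ctwo$ to obtain $\ge q/2$ coordinates $j$ with $u^{(j)}\neq v^{(j)}$, then the distance of $\Cone$ to get $\ge p/4$ disagreements per such column, apply \cref{lem:erdos-szekeres} to extract a strictly increasing chain of length $t'\ge \lfloor pq/(16(p+q))\rfloor$, and finally route a grid path through those key sub-matrices (weight~$6$ each) while filling in the remaining weight exactly as in the $u=v$ baseline to reach $4p+6q-3+\min(p,q)+t'$. Your explicit trimming of columns to exactly $p/4$ ones and your remark that $t'\le \min(p,q)$ (so the chain turns fit inside the reference path) are details the paper leaves implicit, but the strategy is the same.
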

	
	\begin{proof}
	 By Claim~\ref{clm:weight_lis_eq}, only need to show there exists a path in $M$ with weight at least $4p + 6q -3 + \min(p,q) + \frac{pq}{16(p+q)}$.
	 
	 Since $u$ and $v$ are chosen from $\Ctwo$, which is an error-correcting code with distance $q/2$. If $u\ne v$, there are $q/2$ positions what $u$ and $v$ that are not equal. Let $j$ be on of them ($u^{(j)}\ne v^{(j)}$). Since $u^{(j)}$ and $ v^{(j)}$ are both codewords of $\Cone\subseteq \{0,1\}^p$. There are $p/4$ positions that $u^{(j)}$ and $ v^{(j)}$ does not equal. 
	 
	 Let $A$ be a $p\times q$ matrix such that $A_{i,j} = 1$ if  $u^{(j)} \ne v^{(j)}$ and  $A_{i,j} = 0$ if  $u^{(j)} = v^{(j)}$. Then in matrix $A$ there are at least $q/2$ columns with at least $p/4$ 1's. Then, if we only look at those columns with at least $p/4$ 1's, by \cref{lem:erdos-szekeres}, there are  $t = \lfloor \frac{r\cdot s}{16(r+s)} \rfloor$ positions in the matrix (denoted $(i_1, j_1), (i_2, j_2), \dots, (i_t, j_t)$) with value 1, such that  $(i_1, j_1) < (i_2, j_2)< \cdots < (i_t, j_t)$.
	 
	 The best strategy for a path to gain largest weight is to go through key sub-matrices that can provide a total weight of 6 as much as possible ($u^{(j)} \ne v^{(j)}$). \cref{lem:erdos-szekeres} guarantees there are at least $t = \lfloor \frac{r\cdot s}{16(r+s)} \rfloor$ that a path can go through. For the remaining, it can go from top-left to bottom-right of $\min{p,q} - t$ key sub-matrices. Each will give a weight of at least 5. If $p>q$, it can gain additional $4(p-q)$ by going through $p-q$ key sub-matrices but using only one column. Along the blue arrows, a path can again at least $6q - 3$ (minus the 3 it misses on the first row). Thus, we can find a path with weight at least $4p + 6q -3 + \min(p,q) + \frac{pq}{16(p+q)}$.
	 
	\end{proof}
	
	By Claim~\ref{clm:lis_upper_bound} and Claim~\ref{clm:lis_lower_bound}, we can conclude that $\Ctwo$ is a fooling set for $\textbf{type 2}$. Even if Alice and Bob can approximate $\lis$ to within a factor of $1+1/400$. 
	
	When $p,q$ are large enough, we have 
	\begin{align*}
	    (1+1/400) \cdot \big(4p + 6q + \min(p,q)\big) \le & 4p + 6q + \min(p,q) + \frac{11\max(p,q)}{400} \\
	    < & 4p + 6q + \min(p,q) + \frac{22\cdot p\cdot q}{400\cdot(p+q)} \\
	    \le & 4p + 6q + \min(p,q) + \frac{pq}{16(p+q)} - 3. \\
	\end{align*}
	
	In this case, they can still not determine whether $u = v$ or not. Since both $\Cone$ and $\Ctwo$ are asymptotically good, $\Ctwo$ has size $2^{\Omega(rs)}$. To deterministically approximate $\lis$ to within a $1+1/400$ factor (know whether $u = v$ or not), any $R$-round streaming algorithm would need a space of at least $\Omega(rs/R)$.




\end{proof}

\bibliographystyle{alpha}
\bibliography{references}

\end{document}